\newtheorem{theorem}{Theorem}
\newtheorem{example}{Example}
\newtheorem{remark}{Remark}
\newtheorem{corollary}{Corollary}
\newtheorem{lemma}{Lemma}
\newtheorem{proposition}{Proposition}
\begin{document}

\title{Some New Constructions of Generalized Plateaued Functions$^{\dag}$}
\author{Jiaxin Wang, Fang-Wei Fu
\IEEEcompsocitemizethanks{\IEEEcompsocthanksitem Jiaxin Wang and Fang-Wei Fu are with Chern Institute of Mathematics and LPMC, Nankai University, Tianjin 300071, P.R.China, Emails: wjiaxin@mail.nankai.edu.cn, fwfu@nankai.edu.cn.
}
\thanks{$^\dag$This research is supported by the National Key Research and Development Program of China (Grant No. 2018YFA0704703), the National Natural Science Foundation of China (Grant Nos. 12141108 and 61971243), the Natural Science Foundation of Tianjin (20JCZDJC00610), the Fundamental Research Funds for the Central Universities of China (Nankai University), and the Nankai Zhide Foundation.}}

\maketitle

\begin{abstract}
  Plateaued functions as an extension of bent functions play a significant role in cryptography, coding theory, sequences
  and combinatorics. In \cite{Mesnager9}, Mesnager \emph{et al.} introduced generalized plateaued functions in order to study plateaued functions in the general context of generalized $p$-ary functions. In this paper, we focus on the constructions of generalized $p$-ary $s$-plateaued functions from $V_{n}$ to $\mathbb{Z}_{p^k}$, where $V_{n}$ is an $n$-dimensional vector space over $\mathbb{F}_{p}$, $p$ is a prime, $k\geq 1$ and $n+s$ is even when $p=2$. In particular, when $k=1$, the constructions in this paper are applicable for plateaued functions. Firstly, inspired by the work of Hod\v{z}i\'{c} \emph{et al}. \cite{Hodzic3} for Boolean plateaued functions, we characterize generalized plateaued functions with affine Walsh supports and provide constructions of generalized plateaued functions with (non)-affine Walsh supports by spectral method. When $p=2, k=1$, our constructions of Boolean plateaued functions with (non)-affine Walsh supports provide an answer to the Open Problem 2 proposed in \cite{Hodzic3}. Secondly, based on what we called generalized indirect sum, we give constructions of generalized plateaued functions, which are also applicable for (non)-weakly regular generalized bent functions. In particular, we show that the canonical way to construct Generalized Maiorana-McFarland bent functions can be obtained by the generalized indirect sum and we illustrate that the generalized indirect sum can be used to construct bent functions not in the completed Generalized Maiorana-McFarland class. Based on the generalized indirect sum, we also give constructions of plateaued functions in the subclass \emph{WRP} of the class of weakly regular plateaued functions and vectorial plateaued functions. In the end, we discuss the constructions of pairwise disjoint spectra generalized plateaued functions with (non)-affine Walsh supports and we present a construction of generalized bent functions by using pairwise disjoint spectra generalized plateaued functions as building blocks.
\end{abstract}

\begin{IEEEkeywords}
Plateaued functions; generalized plateaued functions; Walsh transform; bent functions; generalized bent functions; generalized indirect sum
\end{IEEEkeywords}

\section{Introduction}
\label{intro}
Boolean bent functions introduced by Rothaus \cite{Rothaus} play an important role in cryptography, coding theory, sequences and combinatorics. Kumar \emph{et al}. \cite{Kumar} generalized Boolean bent functions to bent functions over finite fields of odd characteristic. Due to the importance of bent functions, they have been extensively studied. There is an exhaustive survey \cite{Carlet5} and books \cite{Carlet2,Mesnager1} on bent functions. Recently, Mesnager \emph{et al.} \cite{Mesnager10} introduced generalized bent functions from $V_{n}$ to $\mathbb{Z}_{p^k}$, where $V_{n}$ is an $n$-dimensional vector space over $\mathbb{F}_{p}$, $p$ is a prime. For more characterizations and constructions of generalized bent functions from $V_{n}$ to $\mathbb{Z}_{p^k}$, we refer to \cite{Hodzic1,Hodzic2,Martinsen1,Martinsen2,Meidl,Mesnager2,Mesnager10,Qi,Stanica,Tang}.

In \cite{Carlet4}, Carlet introduced Boolean partially bent functions which is an extension of Boolean bent functions. As an extension of Boolean partially bent functions, Zheng and Zhang \cite{Zheng} introduced Boolean plateaued functions. Surveys on Boolean plateaued functions can be found in \cite{Carlet1,Carlet2,Mesnager1}. The notion of Boolean partially bent functions and Boolean plateaued functions have been generalized to $p$-ary partially bent functions and $p$-ary plateaued functions for any odd prime $p$ (see \cite{Cesmelioglu1,Cesmelioglu2}). Then they have been studied in \cite{Cesmelioglu1,Cesmelioglu2,Hyun,Mesnager4,Mesnager5,Potapov}. In \cite{Hyun}, Hyun \emph{et al}. searched for explicit criteria for constructing $p$-ary plateaued functions. More specifically, for $p$-ary $s$-plateaued functions, they derived an explicit form for the Walsh transform, obtained an upper bound on the degree and provided explicit criteria for the existence. In \cite{Mesnager4,Mesnager5}, Mesnager \emph{et al}. presented characterizations of $p$-ary plateaued functions in terms of the second-order derivatives and the moments of Walsh transform, which allow us a better understanding of the structure of $p$-ary plateaued functions. Apart from the desirable cryptographic properties, plateaued functions play a significant role in coding theory, sequences and combinatorics (see e.g. \cite{Serdar,Mesnager3,Mesnager7,Mesnager8,Oktay}). In \cite{Mesnager9}, Mesnager \emph{et al.} introduced generalized plateaued functions from $V_{n}$ to $\mathbb{Z}_{p^k}$ in order to study plateaued functions in the general context of generalized $p$-ary functions. As far as we know, there are only a few papers on generalized plateaued functions \cite{Riera,Mesnager6,Mesnager9} up to now. We review the main contributions for generalized plateaued functions given in these papers. In \cite{Mesnager9}, first of all, the authors gave an explicit form for the Walsh transform of generalized plateaued functions. They then investigated the relations between generalized plateaued functions and plateaued functions by the decomposition of generalized plateaued functions. In particular, they used admissible plateaued functions to characterize generalized plateaued functions by means of their components. Finally, they provided for the first time two constructions of generalized Boolean plateaued functions. In \cite{Riera}, for generalized Boolean plateaued functions, the authors provided two constructions and characterized them in terms of the second-order derivatives and the fourth moment of Walsh transform. In \cite{Mesnager6}, a special class of generalized plateaued functions called $\mathbb{Z}_{2^k}$-plateaued functions was studied in terms of so called $(c, s)$-plateaued functions. In particular, the authors gave characterizations of $(2^t, s)$-plateaued functions in terms of the second-order derivatives and the fourth moment of Walsh transform, which generalize the results given in \cite{Riera}. They pointed out that even though the paper \cite{Mesnager6} only stated the results for characteristic $2$, similar results can be obtained for odd characteristic. For generalized $p$-ary plateaued functions, the constructions in \cite{Riera,Mesnager9} are for $p=2$ and there are lacks of constructions with $p$ taking any prime. The main contribution of this paper (which will be introduced below) is to provide constructions of generalized $p$-ary plateaued functions for any prime $p$.

Recently, Hod\v{z}i\'{c} \emph{et al}. \cite{Hodzic3} designed Boolean plateaued functions in spectral domain. Designing plateaued functions in spectral domain is based on the fact that any function and its Walsh transform are mutually determined. In this paper, we focus on the constructions of generalized $s$-plateaued functions from $V_{n}$ to $\mathbb{Z}_{p^k}$, where $V_{n}$ is an $n$-dimensional vector space over $\mathbb{F}_{p}$, $p$ is a prime, $k\geq 1$ and $n+s$ is even when $p=2$. In particular, when $k=1$, the constructions in this paper are applicable for plateaued functions. Firstly, inspired by the work of Hod\v{z}i\'{c} \emph{et al}. \cite{Hodzic3}, we characterize generalized plateaued functions with affine Walsh supports and provide constructions of generalized plateaued functions with (non)-affine Walsh supports in spectral domain. As pointed out in \cite{Hodzic3}, for the constructions in spectral domain given in \cite{Hodzic3}, the Walsh supports of Boolean $s$-plateaued functions in $n$ variables, when written as matrices, contain at least $n-s$ columns corresponding to affine functions on $\mathbb{F}_{2}^{n-s}$. They proposed an open problem (Open Problem 2) to provide constructions of Boolean $s$-plateaued functions in $n$ variables whose Walsh supports, when written as matrices, contain strictly less than $n-s$ columns corresponding to affine functions. In our constructions of generalized $s$-plateaued functions with non-affine Walsh supports, the Walsh supports, when written as matrices, can contain strictly less than $n-s$ columns corresponding to affine functions. When $p=2, k=1$, these constructions provide an answer to Open Problem 2 proposed in \cite{Hodzic3}. Secondly, based on what we called generalized indirect sum, we provide constructions of generalized plateaued functions, which are also applicable for (non)-weakly regular generalized bent functions. In particular, we show that the canonical way to construct Generalized Maiorana-McFarland bent functions can be obtained by the generalized indirect sum and we illustrate that the generalized indirect sum can be used to construct bent functions not in the completed Generalized Maiorana-McFarland class. Based on the generalized indirect sum, we also give constructions of plateaued functions in the subclass \emph{WRP} of the class of weakly regular plateaued functions and vectorial plateaued functions. In the end, we discuss the constructions of pairwise disjoint spectra generalized plateaued functions with (non)-affine Walsh supports and we present a construction of generalized bent functions by using pairwise disjoint spectra generalized plateaued functions as building blocks.

The rest of the paper is organized as follows. In Section II, we introduce the needed definitions and results related to generalized plateaued functions. In Section III, based on the principle of designing generalized plateaued functions in spectral domain, we characterize generalized plateaued functions with affine Walsh supports and provide constructions of generalized plateaued functions with (non)-affine Walsh supports. In Section IV, based on what we called generalized indirect sum, we give constructions of generalized plateaued functions, which are also applicable for generalized bent functions. In Section V, we discuss the constructions of pairwise disjoint spectra generalized plateaued functions with (non)-affine Walsh supports and we present a construction of generalized bent functions by using pairwise disjoint spectra generalized plateaued functions as building blocks. In Section VI, we make a conclusion.
\section{Preliminaries}
\label{sec:1}
Throughout this paper, let $\mathbb{Z}_{p^k}$ be the ring of integers modulo $p^k$, $\mathbb{F}_{p}^{n}$ be the vector space of the $n$-tuples over $\mathbb{F}_{p}$, $\mathbb{F}_{p^n}$ be the finite field with $p^n$ elements and $V_{n}$ be an $n$-dimensional vector space over $\mathbb{F}_{p}$, where $p$ is a prime and $k, n$ are positive integers. The classical representations of $V_{n}$ are $\mathbb{F}_{p}^{n}$ and $\mathbb{F}_{p^n}$. For $a, b \in V_{n}$, let $\langle a, b \rangle$ denote a (non-degenerate) inner product of $V_{n}$. When $a=(a_{1}, \dots, a_{n}), b=(b_{1}, \dots, b_{n})\in \mathbb{F}_{p}^{n}$, let $\langle a, b \rangle=a \cdot b=\sum_{i=1}^{n}a_{i}b_{i}$. When $a, b \in \mathbb{F}_{p^n}$, let $\langle a, b \rangle=Tr_{1}^{n}(ab)$, where $Tr_{1}^{n}(\cdot)$ is the absolute trace function. When $V_{n}=V_{n_{1}}\times \dots \times V_{n_{s}}\ (n=\sum_{i=1}^{s}n_{i})$, let $\langle a, b\rangle =\sum_{i=1}^{s}\langle a_{i}, b_{i}\rangle $, where $a=(a_{1}, \dots, a_{s}), b=(b_{1}, \dots, b_{s})\in V_{n}$.

A function $f$ from $V_{n}$ to $\mathbb{Z}_{p^k}$ is called a generalized $p$-ary function, or simply $p$-ary function when $k=1$. A $p$-ary function $L: V_{n}\rightarrow \mathbb{F}_{p}$ is called a linear function if $L(ax+by)=aL(x)+bL(y)$ for any $a, b \in \mathbb{F}_{p}$ and $x, y \in V_{n}$. All linear functions from $V_{n}$ to $\mathbb{F}_{p}$ form an $n$-dimensional linear space $\mathcal{L}_{n}$ and $\{\langle \alpha_{i}, x\rangle, 1\leq i\leq n\}$ is a basis of $\mathcal{L}_{n}$, where $\{\alpha_{i}, 1\leq i \leq n\}$ is a basis of $V_{n}$. If a $p$-ary function $A: V_{n}\rightarrow \mathbb{F}_{p}$ is the sum of a linear function and a constant, then $A$ is called an affine function.

The Walsh transform of a generalized $p$-ary function $f: V_{n} \rightarrow \mathbb{Z}_{p^k}$ is the complex valued function $W_{f}$ on $V_{n}$ defined as
 \begin{equation}\label{1}
  W_{f}(a)=\sum_{x\in V_{n}}\zeta_{p^k}^{f(x)}\zeta_{p}^{-\langle a, x \rangle}, \ a \in V_{n},
 \end{equation}
where for any positive integer $q$, $\zeta_{q}=e^{\frac{2\pi \sqrt{-1}}{q}}$ is the complex primitive $q$-th root of unity. The generalized $p$-ary function $f$ can be recovered by the inverse transform
 \begin{equation}\label{2}
  \zeta_{p^k}^{f(x)}=\frac{1}{p^n}\sum_{a\in V_{n}}W_{f}(a)\zeta_{p}^{\langle a, x\rangle}, \ x \in V_{n}.
 \end{equation}

The multiset $\{W_{f}(a), a \in V_{n}\}$ is called the Walsh spectrum of $f$. The set $S_{f}=\{a\in V_{n}: W_{f}(a)\neq 0\}$ is called the Walsh support of $f$. Functions $f_{1}, \dots, f_{m}$ are called pairwise disjoint spectra functions if $S_{f_{i}}\cap S_{f_{j}}=\emptyset $ for any $i\neq j$.

A generalized $p$-ary function $f: V_{n}\rightarrow \mathbb{Z}_{p^k}$ is called a generalized $p$-ary $s$-plateaued function, or simply $p$-ary $s$-plateaued function when $k=1$ if $|W_{f}(a)|=p^{\frac{n+s}{2}}$ or $0$ for any $a\in V_{n}$. If $s=0$, the generalized $p$-ary $0$-plateaued function $f$ is just the generalized $p$-ary bent function and $S_{f}=V_{n}$. When $p=2, k=1$, if $f: V_{n}\rightarrow \mathbb{F}_{2}$ is an $s$-plateaued function, then $n+s$ is even.

For generalized $s$-plateaued functions $f:V_{n}\rightarrow \mathbb{Z}_{p^k}$, there is a basic property: $|S_{f}|=p^{n-s}$, which is obtained by the Parseval identity $\sum_{x \in V_{n}}|W_{f}(x)|^{2}=p^{2n}$. In \cite{Mesnager9}, Mesnager \emph{et al}. have shown that the Walsh transform of a generalized $p$-ary $s$-plateaued function $f: V_{n}\rightarrow \mathbb{Z}_{p^k}$ satisfies that for any $a\in S_{f}$,
when $p=2$ and $n+s$ is even, $W_{f}(a)=2^{\frac{n+s}{2}}\zeta_{2^k}^{f^{*}(a)}$, and when $p$ is an odd prime,
 \begin{equation*}
  W_{f}(a)=\left\{\begin{array}{cc}
                    \pm p^{\frac{n+s}{2}}\zeta_{p^k}^{f^{*}(a)}& \ \text{if} \ n+s \ \text{is even or} \ p\equiv 1 \ (mod\ 4), \\
                    \pm \sqrt{-1} p^{\frac{n+s}{2}}\zeta_{p^k}^{f^{*}(a)}& \ \ \text{if} \ n+s \ \text{is odd and} \ p\equiv 3 \ (mod\ 4),
                  \end{array}\right.
 \end{equation*}
where $f^{*}$ is a function from $S_{f}$ to $\mathbb{Z}_{p^k}$. We call $f^{*}$ the dual of $f$.

In the sequel, if $f: V_{n}\rightarrow \mathbb{Z}_{p^k}$ is a generalized $s$-plateaued function with dual $f^{*}$, define function $\mu_{f}$ as
\begin{equation}\label{3}
  \mu_{f}(a)=p^{-\frac{n+s}{2}}\zeta_{p^k}^{-f^{*}(a)}W_{f}(a), a \in S_{f}.
\end{equation}
If $p\equiv 1 \ (mod  \ 4)$ or $p \equiv 3 \ (mod \ 4)$ and $n+s$ is even, then $\mu_{f}$ is a function from $S_{f}$ to $\{\pm 1\}$. If $p \equiv 3 \ (mod \ 4)$ and $n+s$ is odd, then $\mu_{f}$ is a function from $S_{f}$ to $\{\pm \sqrt{-1}\}$. If $p=2$ and $n+s$ is even, then $\mu_{f}(x)=1, x \in S_{f}$. For a generalized bent function $f: V_{n}\rightarrow \mathbb{Z}_{p^k}$, that is, generalized $0$-plateaued function, if $\mu_{f}$ is a constant function, then $f$ is called weakly regular, otherwise $f$ is called non-weakly regular. In particular, if $\mu_{f}(x)=1, x \in V_{n}$, $f$ is called regular. In \cite{Mesnager3}, Mesnager \emph{et al}. introduced the notion of (non)-weakly regular plateaued functions. For an $s$-plateaued function $f: V_{n}\rightarrow \mathbb{F}_{p}$, if $\mu_{f}$ is a constant function, then $f$ is called weakly regular, otherwise $f$ is called non-weakly regular. In particular, if $\mu_{f}(x)=1, x \in S_{f}$, $f$ is called regular.

We recall some well-known (generalized) bent functions.
\begin{itemize}
  \item When $V_{n}=\mathbb{F}_{p^n}\times \mathbb{F}_{p^n}$, let $f: V_{n} \rightarrow \mathbb{Z}_{p^k}$ be defined as
  \begin{equation}\label{4}
    f(x, y)=p^{k-1}Tr_{1}^{n}(\alpha x \pi(y))+g(y),
  \end{equation}
  where $\alpha \in \mathbb{F}_{p^n}^{*}$, $\pi$ is a permutation over $\mathbb{F}_{p^n}$ and $g: \mathbb{F}_{p^n}\rightarrow \mathbb{Z}_{p^k}$ is an arbitrary function.
  When $V_{n}=\mathbb{F}_{p}^{n} \times \mathbb{F}_{p}^{n}$, let $f: V_{n} \rightarrow \mathbb{Z}_{p^k}$ be defined as
  \begin{equation}\label{5}
    f(x, y)=p^{k-1} x \cdot \pi(y)+g(y),
  \end{equation}
  where $\pi$ is a permutation over $\mathbb{F}_{p}^{n}$ and $g: \mathbb{F}_{p}^{n}\rightarrow \mathbb{Z}_{p^k}$ is an arbitrary function.
  Then $f$ defined by (4), respectively (5), is called a Maiorana-McFarland generalized bent function with
  \begin{equation}\label{6}
    W_{f}(x, y)=p^{n}\zeta_{p^k}^{p^{k-1}Tr_{1}^{n}(-\pi^{-1}(\alpha^{-1}x)y)+g(\pi^{-1}(\alpha^{-1}x))},
  \end{equation}
  respectively,
  \begin{equation}\label{7}
    W_{f}(x, y)=p^{n}\zeta_{p^k}^{-p^{k-1}\pi^{-1}(x)\cdot y+g(\pi^{-1}(x))}.
  \end{equation}
  \item Let $f: \mathbb{F}_{p^n} \times \mathbb{F}_{p^n}\rightarrow \mathbb{F}_{p}$ be defined as
  \begin{equation*}
    f(x, y)=Tr_{1}^{n}(\alpha G(xy^{p^n-2})),
  \end{equation*}
  where $\alpha \in \mathbb{F}_{p^n}^{*}$ and $G$ is a permutation over $\mathbb{F}_{p^n}$ with $G(0)=0$. Then $f$ is called a $p$-ary $PS_{ap}$ bent function (which is a generalization of Boolean $PS_{ap}$ bent functions \cite{Dillon}) with
  \begin{equation}\label{8}
    W_{f}(x, y)=p^{n}\zeta_{p}^{Tr_{1}^{n}(\alpha G(-x^{p^n-2}y))}.
  \end{equation}
  The class of $PS_{ap}$ bent functions is a subclass of the famous class of partial spread bent functions. For partial spread bent functions, we refer to \cite{Dillon}, \cite{Lisonek}.
  \item  Let $p$ be an odd prime, and let $\eta$ be the multiplicative quadratic character of $\mathbb{F}_{p^n}$, that is, $\eta(x)=1$ if $x \in \mathbb{F}_{p^n}^{*}$ is a square and $\eta(x)=-1$ if $x \in \mathbb{F}_{p^n}^{*}$ is a non-square. Let $f: \mathbb{F}_{p^n}\rightarrow \mathbb{F}_{p}$ be defined as $f(x)=Tr_{1}^{n}(\alpha x^{2})$, where $\alpha \in \mathbb{F}_{p^n}^{*}$. Then $f$ is a bent function with
      \begin{equation}\label{9}
        W_{f}(a)=(-1)^{n-1}\epsilon^{n}\eta(\alpha)p^{\frac{n}{2}}\zeta_{p}^{Tr_{1}^{n}(-\frac{a^2}{4\alpha})},
      \end{equation}
  where $\epsilon=1$ if $p\equiv 1 \ (mod \ 4)$ and $\epsilon=\sqrt{-1}$ if $p\equiv 3 \ (mod \ 4)$ (see \cite{Helleseth}).
 \end{itemize}

If $f: V_{n}\rightarrow \mathbb{Z}_{p^k}$ is a generalized $n$-plateaued function, then $|S_{f}|=1$ and it is easy to obtain $f(x)=p^{k-1}\langle a, x\rangle +b$ for some $a \in V_{n}, b\in \mathbb{Z}_{p^k}$ by the inverse Walsh transform (2). In this paper, we study generalized $s$-plateaued functions $f: V_{n}\rightarrow \mathbb{Z}_{p^k}$, where $0\leq s < n$, $p$ is prime, $k\geq 1$ and $n+s$ is even when $p=2$.
\section{Constructing generalized plateaued functions by spectral method}
In this section, we provide some constructions of generalized $s$-plateaued functions by spectral method, where $s\geq 1$.

We fix some notation unless otherwise stated. Let $m$ be an arbitrary positive integer. Define the notation of lexicographic order $\prec : a\prec b$ if $\sum_{i=1}^{m}p^{m-i}a_{i}$ $<\sum_{i=1}^{m}p^{m-i}b_{i}$, where $a=(a_{1}, \dots, a_{m}), b=(b_{1}, \dots, b_{m})\in \mathbb{F}_{p}^{m}$. Define
\begin{equation}\label{10}
  v_{i}=\sum_{j=1}^{m}v_{i, j}\alpha_{j}, 0\leq i \leq p^m-1,
\end{equation}
where $\{\alpha_{1}, \dots, \alpha_{m}\}$ is some fixed basis of $V_{m}$ over $\mathbb{F}_{p}$ and $\{(v_{0, 1}, \dots, v_{0, m}), \dots, $ $(v_{p^m-1, 1}, \dots, $ $v_{p^m-1, m})\}$ is the lexicographic order of $\mathbb{F}_{p}^{m}$. When $V_{m}=\mathbb{F}_{p}^{m}$, we let $\alpha_{1}=(1, 0, \dots, 0, 0)\in \mathbb{F}_{p}^{m}, \dots,$ $\alpha_{m}=(0, 0, \dots, 0, 1)\in \mathbb{F}_{p}^{m}$, that is, $\{v_{0}, \dots, v_{p^m-1}\}$ denotes the lexicographic order of $\mathbb{F}_{p}^{m}$. For a $p$-ary function $f: V_{m}\rightarrow \mathbb{F}_{p}$, define its true table
\begin{equation}\label{11}
  T_{f}=\left(f(v_{0}), \dots, f(v_{p^m-1})\right)^{T},
\end{equation}
where $M^{T}$ denotes the transpose of matrix $M$. Let $\delta$ be the Kronecker delta function, that is,
\begin{equation*}
  \delta (i, j)=\left\{\begin{array}{cc}
                 1  & \ \text{if} \ i=j, \\
                 0  & \ \text{if} \ i\neq j.
                 \end{array}\right.
\end{equation*}
\subsection{The principle of designing generalized plateaued functions in spectral domain}
In this subsection, we explain the principle of designing generalized plateaued functions in spectral domain.

Suppose $S \subseteq \mathbb{F}_{p}^{n}$ with size $p^{m}$ is ordered as $S=\{w_{0}, w_{1}, \dots, w_{p^{m}-1}\}$. For any $a\in \mathbb{F}_{p}^{n}$, define $\psi_{a}$ from $V_{m}$ to $\mathbb{F}_{p}$:
\begin{equation}\label{12}
 \psi_{a}(v_{i})=a\cdot w_{i}, 0\leq i \leq p^{m}-1,
 \end{equation}
where $v_{i}$ is defined by (10).

Under notation as above we have the following proposition, which describes the principle of designing generalized plateaued functions in spectral domain. When $p=2, k=1$, the following proposition reduces to i) of Theorem 3.3 of \cite{Hodzic3}.
\begin{proposition}\label{Proposition1}
Let $p$ be a prime, $n, k, s \ (< n)$ be positive integers and $n+s$ be even for $p=2$. Let $S$ be a subset of $\mathbb{F}_{p}^{n}$ with size $p^{n-s}$ and be ordered as $S=\{w_{0}, w_{1}, \dots, w_{p^{n-s}-1}\}$. Let $d$ be a function from $V_{n-s}$ to $\mathbb{Z}_{p^k}$. Let $\mu$ be a function from $V_{n-s}$ to $\{\pm 1 \}$ if $p \equiv 1 \ (mod \ 4)$ or $p \equiv 3 \ (mod \ 4)$ and $n+s$ is even, $\mu$ be a function from $V_{n-s}$ to $\{\pm \sqrt{-1}\}$ if $p \equiv 3 \ (mod \ 4)$ and $n+s$ is odd, and $\mu(x)=1, x \in V_{n-s}$ if $p=2$ and $n+s$ is even. Define the complex valued function $W$ on $\mathbb{F}_{p}^{n}$ as
\begin{equation}\label{13}
  W(a)=p^{\frac{n+s}{2}}\sum_{i=0}^{p^{n-s}-1}\delta(a, w_{i})\mu(v_{i})\zeta_{p^k}^{d(v_{i})}, a \in \mathbb{F}_{p}^{n}.
\end{equation}
Then $W$ is the Walsh transform of a generalized $s$-plateaued function $f: \mathbb{F}_{p}^{n} \rightarrow \mathbb{Z}_{p^k}$ if and only if $(p^{\frac{s-n}{2}}\sum_{x\in V_{n-s}}\mu(x)\zeta_{p^k}^{d(x)+p^{k-1}\psi_{a}(x)})^{p^k}=1$ for any $a\in \mathbb{F}_{p}^{n}$, where $\psi_{a}$ is defined by (12).
\end{proposition}
\begin{proof}
If $W$ is the Walsh transform of a generalized $s$-plateaued function $f: \mathbb{F}_{p}^{n} \rightarrow \mathbb{Z}_{p^k}$, then by the inverse Walsh transform (2) we have
\begin{equation*}
  \begin{split}
    \zeta_{p^k}^{f(a)}&=\frac{1}{p^n}\sum_{i=0}^{p^{n-s}-1}\mu(v_{i})p^{\frac{n+s}{2}}\zeta_{p^k}^{d(v_{i})}\zeta_{p}^{a\cdot w_{i}}\\
       &=p^{\frac{s-n}{2}}\sum_{i=0}^{p^{n-s}-1}\mu(v_{i})\zeta_{p^k}^{d(v_{i})+p^{k-1}\psi_{a}(v_{i})}\\
       &=p^{\frac{s-n}{2}}\sum_{x\in V_{n-s}}\mu(x)\zeta_{p^k}^{d(x)+p^{k-1}\psi_{a}(x)},
  \end{split}
\end{equation*}
thus $(p^{\frac{s-n}{2}}\sum_{x\in V_{n-s}}\mu(x)\zeta_{p^k}^{d(x)+p^{k-1}\psi_{a}(x)})^{p^k}=1$ for any $a\in \mathbb{F}_{p}^{n}$.

Conversely, suppose $(p^{\frac{s-n}{2}}\sum_{x\in V_{n-s}}\mu(x)\zeta_{p^k}^{d(x)+p^{k-1}\psi_{a}(x)})^{p^k}=1$ for any $a\in \mathbb{F}_{p}^{n}$. Then there is a unique generalized $p$-ary function $f: \mathbb{F}_{p}^{n}\rightarrow \mathbb{Z}_{p^k}$ such that $p^{\frac{s-n}{2}}\sum_{x\in V_{n-s}}\mu(x)\zeta_{p^k}^{d(x)+p^{k-1}\psi_{a}(x)}$ $=\zeta_{p^k}^{f(a)}$. The function $W$ is the Walsh transform of $f$. Indeed,
\begin{equation*}
  \begin{split}
    W_{f}(a) & =\sum_{x\in \mathbb{F}_{p}^{n}}(p^{\frac{s-n}{2}}\sum_{y\in V_{n-s}}\mu(y)\zeta_{p^k}^{d(y)+p^{k-1}\psi_{x}(y)})\zeta_{p}^{-a\cdot x}\\
             & =p^{\frac{s-n}{2}}\sum_{x\in\mathbb{F}_{p}^{n}}\sum_{i=0}^{p^{n-s}-1}\mu(v_{i})\zeta_{p^k}^{d(v_{i})+p^{k-1}x\cdot w_{i}}\zeta_{p}^{-a\cdot x}\\
  \end{split}
\end{equation*}
\begin{equation*}
\begin{split}
             &=p^{\frac{s-n}{2}}\sum_{i=0}^{p^{n-s}-1}\mu(v_{i})\zeta_{p^k}^{d(v_{i})}\sum_{x\in \mathbb{F}_{p}^{n}}\zeta_{p}^{(w_{i}-a)\cdot x}.
   \end{split}
\end{equation*}
If $a\notin S=\{w_{0}, w_{1}, \dots,w_{p^{n-s}-1}\}$, then $W_{f}(a)=0$. If $a=w_{i}$ for some $0 \leq i \leq p^{n-s}-1$, then $W_{f}(a)=p^{\frac{n+s}{2}}\mu(v_{i})\zeta_{p^k}^{d(v_{i})}$. Hence, $W_{f}(a)=W(a)$ for any $a \in \mathbb{F}_{p}^{n}$ and $S_{f}=S$, $|W_{f}(a)|=p^{\frac{n+s}{2}}$ for any $a \in S_{f}$, that is, $W$ is the Walsh transform of $f$ and $f$ is a generalized $s$-plateaued function.
\end{proof}

Let $W_{K}$ denote the group of roots of unity of cyclotomic field $K=\mathbb{Q}(\zeta_{p^k})$, then $W_{K}=\{\zeta_{2^k}^{i}: 0 \leq i \leq 2^k-1\}$ if $p=2$ and $W_{K}=\{\pm \zeta_{p^k}^{i}:0\leq i \leq p^k-1\}$ if $p$ is an odd prime. Let $p^{*}=\left(\frac{-1}{p}\right)p$ if $p$ is an odd prime, where $\left(\frac{-1}{p}\right)=(-1)^{\frac{p-1}{2}}$ denotes the Legendre symbol and $p^{*}=2$ if $p=2$. By the knowledge on cyclotomic field $\mathbb{Q}(\zeta_{p^k})$ (see Lemma 24 of \cite{Mesnager9}), $\frac{\alpha}{\sqrt{p^{*}}^{m}}\in W_{K}$ if $\alpha \in \mathbb{Z}[\zeta_{p^k}]$ with $|\alpha|=p^{\frac{m}{2}}$, where $m$ is a positive integer and $m$ is even if $p=2$. Then it is easy to verify that the necessary and sufficient condition in Proposition 1 can be written in the following form.
\begin{proposition}\label{Proposition2}
Keep the same notation as in Proposition 1.

(1) When $p=2$ and $n+s$ is even, the function $W$ defined by (13) is the Walsh transform of a generalized $s$-plateaued function $f: \mathbb{F}_{p}^{n} \rightarrow \mathbb{Z}_{p^k}$ if and only if $|\sum_{x\in V_{n-s}}\zeta_{p^k}^{d(x)+p^{k-1}\psi_{a}(x)}|=p^{\frac{n-s}{2}}$ for any $a\in \mathbb{F}_{p}^{n}$.

(2) When $p$ is an odd prime, the function $W$ defined by (13) is the Walsh transform of a generalized $s$-plateaued function $f: \mathbb{F}_{p}^{n} \rightarrow \mathbb{Z}_{p^k}$ if and only if $|\sum_{x\in V_{n-s}}$ $\mu(x)\zeta_{p^k}^{d(x)+p^{k-1}\psi_{a}(x)}|=p^{\frac{n-s}{2}}$ and $(p^{\frac{s-n}{2}}\sum_{x\in V_{n-s}}$ $\mu(x)\zeta_{p^k}^{d(x)+p^{k-1}\psi_{a}(x)})^{p^k}\neq -1$ for any $a\in \mathbb{F}_{p}^{n}$.
\end{proposition}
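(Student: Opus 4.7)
The plan is to derive Proposition 2 directly from Proposition 1 by reformulating the algebraic equality $(p^{(s-n)/2}\Sigma_a)^{p^k}=1$ obtained there as a condition on the absolute value of $\Sigma_a = \sum_{x\in V_{n-s}}\mu(x)\zeta_{p^k}^{d(x)+p^{k-1}\psi_a(x)}$, together with an extra sign constraint in the odd-prime case. I would set $\alpha_a = p^{(s-n)/2}\Sigma_a$ and observe that the two Propositions share the same Walsh-support structure, so the only thing to prove is the equivalence of Proposition 1's criterion with the absolute-value criterion in each of the cases.

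The ``only if'' direction is immediate in both cases: if $\alpha_a^{p^k}=1$ then $\alpha_a$ is a $p^k$-th root of unity, whence $|\alpha_a|=1$, i.e., $|\Sigma_a|=p^{(n-s)/2}$, and trivially $\alpha_a^{p^k}=1\neq -1$ for odd $p$. The substantive direction is ``if''. The key tool is Lemma 24 of \cite{Mesnager8}, quoted just before the statement: any $\beta\in\mathbb{Z}[\zeta_{p^k}]$ with $|\beta|=p^{m/2}$ satisfies $\beta/\sqrt{p^*}^{m}\in W_K$. Applying it with $m=n-s$ to $\Sigma_a$ (which lies in $\mathbb{Z}[\zeta_{p^k}]$ when $\mu\in\{\pm 1\}$ or $\mu\equiv 1$, and after factoring out a $\sqrt{-1}$ when $p\equiv 3\pmod 4$ and $n+s$ is odd so that $\mu\in\{\pm\sqrt{-1}\}$) shows that $\Sigma_a/\sqrt{p^*}^{n-s}\in W_K$; multiplying by the unit $(\sqrt{p^*}/\sqrt{p})^{n-s}\in W_K$ (and, in the $\sqrt{-1}$ subcase, by $\sqrt{-1}$, which combines with $\sqrt{p^*}^{n-s}=(\sqrt{-p})^{n-s}$ for odd $n-s$ into an element of $W_K$) yields $\alpha_a\in W_K$.

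From here the two cases diverge according to the structure of $W_K$ recalled in the text. For $p=2$ we have $W_K=\langle\zeta_{2^k}\rangle$, so every $\alpha_a\in W_K$ automatically satisfies $\alpha_a^{2^k}=1$; Proposition 1 then gives case (1). For odd $p$ we have $W_K=\{\pm\zeta_{p^k}^{i}\}$, so $\alpha_a\in W_K$ only forces $\alpha_a^{p^k}\in\{\pm 1\}$; the supplementary hypothesis $\alpha_a^{p^k}\neq -1$ rules out the minus sign, and Proposition 1 delivers case (2). The only non-routine point is the bookkeeping in the $\mu\in\{\pm\sqrt{-1}\}$ subcase, where one has to track $\sqrt{-1}$ against $\sqrt{p^*}=\sqrt{-p}$ and the parity of $n-s$; every resulting factor lies in $W_K$, so the conclusion $\alpha_a\in W_K$ is preserved and the argument goes through uniformly.
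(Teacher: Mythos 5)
Your proof is correct and follows exactly the route the paper intends: the paper states Proposition 2 without a detailed proof, merely quoting the fact from Lemma 24 of \cite{Mesnager8} that $\alpha/\sqrt{p^{*}}^{m}\in W_{K}$ whenever $\alpha\in\mathbb{Z}[\zeta_{p^k}]$ has $|\alpha|=p^{m/2}$ and asserting that the equivalence is then ``easy to verify.'' You supply precisely that verification, including the correct handling of the $\mu\in\{\pm\sqrt{-1}\}$ subcase and the distinction between $W_{K}=\langle\zeta_{2^k}\rangle$ for $p=2$ and $W_{K}=\{\pm\zeta_{p^k}^{i}\}$ for odd $p$, so nothing is missing.
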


By Proposition 1, we obtain the following corollary.

\begin{corollary}\label{Corollary1}
Let $p$ be a prime, $n, k, s \ (< n)$ be positive integers and $n+s$ be even for $p=2$. Let $S$ be a subset of $\mathbb{F}_{p}^{n}$ with size $p^{n-s}$ and be ordered as $S=\{w_{0}, w_{1}, \dots, w_{p^{n-s}-1}\}$. Let $d$ be a function from $V_{n-s}$ to $\mathbb{Z}_{p^k}$. For any $a\in \mathbb{F}_{p}^{n}$, define $g_{a}(x)=d(x)+p^{k-1}\psi_{a}(x)$, where $\psi_{a}$ is defined by (12). If for any $a\in \mathbb{F}_{p}^{n}$, $g_{a}: V_{n-s}\rightarrow \mathbb{Z}_{p^k}$ is a weakly regular generalized bent function and $\mu_{g_{a}}$ are the same for all $a \in \mathbb{F}_{p}^{n}$, where $\mu_{g_{a}}$ is defined by (3), then $f: \mathbb{F}_{p}^{n}\rightarrow \mathbb{Z}_{p^k}$ defined as $f(x)=g_{x}^{*}(0)$ is a generalized $s$-plateaued function with $S$ as Walsh support, where $g_{x}^{*}$ is the dual of $g_{x}$.
\end{corollary}
\begin{proof}
If for any $a\in \mathbb{F}_{p}^{n}$, $g_{a}$ is a weakly regular generalized bent function and $\mu_{g_{a}}$ are the same for all $a \in \mathbb{F}_{p}^{n}$, then $\mu_{g_{a}}(x)=u$, where $u$ is a constant independent of $a$. Define $\mu(x)=u^{-1}, x \in V_{n-s}$. It is easy to see that $\mu$ satisfy the condition of Proposition 1. For any $a \in \mathbb{F}_{p}^{n}$,
\begin{equation*}
  \sum_{x\in V_{n-s}}\mu(x)\zeta_{p^k}^{d(x)+p^{k-1}\psi_{a}(x)}=u^{-1}W_{g_{a}}(0)=u^{-1}\cdot u p^{\frac{n-s}{2}}\zeta_{p^k}^{g_{a}^{*}(0)}
  =p^{\frac{n-s}{2}}\zeta_{p^k}^{g_{a}^{*}(0)},
\end{equation*}
thus $(p^{\frac{s-n}{2}}\sum_{x\in V_{n-s}}\mu(x)\zeta_{p^k}^{d(x)+p^{k-1}\psi_{a}(x)})^{p^k}$ $=1$. By Proposition 1 and its proof, we can see that $f(x)=g_{x}^{*}(0)$ is a generalized $s$-plateaued function with $S$ as Walsh support.
\end{proof}
\subsection{Characterizing generalized plateaued functions with affine Walsh supports}
In this subsection, we characterize generalized plateaued functions whose Walsh supports are affine subspaces, which extends the case of Boolean plateaued functions \cite{Hodzic3}.
\begin{theorem}\label{Theorem1}
Keep the same notation as in Proposition 1. Let $S=t+E$ be an affine subspace of $\mathbb{F}_{p}^{n}$ and be ordered as $S=\{w_{0}, w_{1}, \dots, w_{p^{n-s}-1}\}$, where $w_{i}=t+v_{i}M , 0\leq i \leq p^{n-s}-1$, $\{v_{0}, \dots, v_{p^{n-s}-1}\}$ is the lexicographic order of $\mathbb{F}_{p}^{n-s}$ and $M$ is a matrix whose row vectors form a basis of the $(n-s)$-dimensional linear subspace $E$. Let $d$ be a function from $\mathbb{F}_{p}^{n-s}$ to $\mathbb{Z}_{p^k}$. Then the function $W$ defined by (13) is the Walsh transform of a generalized $s$-plateaued function $f: \mathbb{F}_{p}^{n}\rightarrow \mathbb{Z}_{p^k}$ if and only if $d$ is the dual of some generalized bent function $g$ and $\mu=\mu_{g}$, where $\mu_{g}$ is defined by (3). Further, if $d$ is the dual of some generalized bent function $g$ and $\mu=\mu_{g}$, then $f(x)=g(xM^{T})+p^{k-1}x\cdot t$.
\end{theorem}
\begin{proof}
For any $a\in \mathbb{F}_{p}^{n}$ and any $0\leq i \leq p^{n-s}-1$, we have $\psi_{a}(v_{i})=a \cdot w_{i}=a\cdot(t+v_{i}M)=a\cdot t+aM^{T}\cdot v_{i}$.

If $d$ is the dual of some generalized bent function $g$ and $\mu=\mu_{g}$, then we have
\begin{equation*}
  \begin{split}
     \sum_{x \in \mathbb{F}_{p}^{n-s}}\mu(x)\zeta_{p^k}^{d(x)+p^{k-1}\psi_{a}(x)} & =\sum_{x \in \mathbb{F}_{p}^{n-s}}\mu(x)\zeta_{p^k}^{d(x)}\zeta_{p}^{a \cdot t+aM^{T}\cdot x } \\
       & =\zeta_{p}^{a\cdot t}p^{\frac{n-s}{2}}\zeta_{p^k}^{g(aM^{T})},
   \end{split}
\end{equation*}
where the second equation is obtained by the inverse Walsh transform (2). Thus for any $a\in \mathbb{F}_{p}^{n}$, $(p^{\frac{s-n}{2}}\sum_{x\in \mathbb{F}_{p}^{n-s}}$ $\mu(x)\zeta_{p^k}^{d(x)+p^{k-1}\psi_{a}(x)})^{p^k}=1$. By Proposition 1 and its proof, the function $W$ defined by (13) is the Walsh transform of a generalized $s$-plateaued function $f: \mathbb{F}_{p}^{n}\rightarrow \mathbb{Z}_{p^k}$ and $f(x)=g(xM^{T})+p^{k-1}x\cdot t$.

Conversely, if the function $W$ defined by (13) is the Walsh transform of a generalized $s$-plateaued function $f: \mathbb{F}_{p}^{n}\rightarrow \mathbb{Z}_{p^k}$, by the proof of Proposition 1 we have
\begin{equation*}
  p^{\frac{s-n}{2}}\sum_{x\in \mathbb{F}_{p}^{n-s}}\mu(x)\zeta_{p^k}^{d(x)+p^{k-1}\psi_{a}(x)}=\zeta_{p^k}^{f(a)}.
\end{equation*}
Then
\begin{equation}\label{14}
  p^{\frac{s-n}{2}}\sum_{x\in \mathbb{F}_{p}^{n-s}}\mu(x)\zeta_{p^k}^{d(x)}\zeta_{p}^{aM^{T}\cdot x}=\zeta_{p^k}^{f(a)-p^{k-1}a\cdot t}.
\end{equation}
For any $y\in \mathbb{F}_{p}^{n-s}$, since $M$ is row full rank, there exists $a_{y}\in \mathbb{F}_{p}^{n}$ such that $a_{y}M^{T}=y$. When $a_{y}M^{T}=b_{y}M^{T}=y$, by (14) we have $f(a_{y})-p^{k-1}a_{y}\cdot t=f(b_{y})-p^{k-1}b_{y}\cdot t$.
Define $g:\mathbb{F}_{p}^{n-s} \rightarrow \mathbb{Z}_{p^k}$ as
\begin{equation*}
  g(y)=f(a_{y})-p^{k-1}a_{y}\cdot t,
\end{equation*}
where $a_{y} \in \mathbb{F}_{p}^{n}$ satisfies $a_{y}M^{T}=y$. Then for any $b\in \mathbb{F}_{p}^{n-s}$, by Equation (14),
\begin{equation*}
  \begin{split}
   W_{g}(b)&=\sum_{y \in \mathbb{F}_{p}^{n-s}}(p^{\frac{s-n}{2}}\sum_{x\in \mathbb{F}_{p}^{n-s}}\mu(x)\zeta_{p^k}^{d(x)}\zeta_{p}^{a_{y}M^{T}\cdot x})\zeta_{p}^{-b\cdot y}\\
           &=p^{\frac{s-n}{2}}\sum_{x\in \mathbb{F}_{p}^{n-s}}\mu(x)\zeta_{p^k}^{d(x)}\sum_{y \in \mathbb{F}_{p}^{n-s}}\zeta_{p}^{y\cdot (x-b)}\\
           &=p^{\frac{n-s}{2}}\mu(b)\zeta_{p^k}^{d(b)},\\
  \end{split}
\end{equation*}
that is, $g$ is a generalized bent function and $d$ is the dual of $g$ and $\mu_{g}=\mu$.
\end{proof}

By Theorem 1, we can see that $f: \mathbb{F}_{p}^{n}\rightarrow \mathbb{Z}_{p^k}$ is a generalized $s$-plateaued function whose Walsh support is an affine subspace if and only if there is a generalized bent function $g: \mathbb{F}_{p}^{n-s}\rightarrow \mathbb{Z}_{p^k}$, a row full rank matrix $M$ over $\mathbb{F}_{p}$ of size $(n-s) \times n$ and $t \in \mathbb{F}_{p}^{n}$ such that $f(x)=g(xM^{T})+p^{k-1}x\cdot t$. Further, if $f$ is a generalized $s$-plateaued function with affine Walsh support, then the Walsh support $S_{f}=\{w_{0}, \dots, w_{p^{n-s}-1}\}$ and the dual $f^{*}(w_{i})=d(v_{i}), 0\leq i \leq p^{n-s}-1$, where $w_{i}=t+v_{i}M$, $\mathbb{F}_{p}^{n-s}=\{v_{0}, \dots, v_{p^{n-s}-1}\}$ and $d$ is the dual of $g$. It is known that plateaued functions with affine Walsh supports correspond to partially bent functions. A function $f: V_{n}\rightarrow \mathbb{F}_{p}$ is called a partially bent function if for any $a \in V_{n}$, $f(x+a)-f(x)$ is either balanced or constant. Since plateaued functions with affine Walsh supports correspond to partially bent functions, let $k=1$, we obtain the following characterization of $p$-ary partially bent functions for any prime $p$, which extends the case of Boolean partially bent functions \cite{Hodzic3}.

\begin{corollary}\label{Corollary2}
Let $p$ be a prime, $n, s \ (< n)$ be positive integers and $n+s$ be even for $p=2$. The function $f: \mathbb{F}_{p}^{n}\rightarrow \mathbb{F}_{p}$ is a partially bent function with $|S_{f}|=p^{n-s}$ if and only if there is a bent function $g: \mathbb{F}_{p}^{n-s}\rightarrow \mathbb{F}_{p}$, a row full rank matrix $M$ over $\mathbb{F}_{p}$ of size $(n-s) \times n$ and $t \in \mathbb{F}_{p}^{n}$ such that $f(x)=g(xM^{T})+x\cdot t$. Further, if $f$ is partially bent with $|S_{f}|=p^{n-s}$, then the Walsh support $S_{f}=\{w_{0}, \dots, w_{p^{n-s}-1}\}$ and the dual  $f^{*}(w_{i})=d(v_{i}), 0\leq i \leq p^{n-s}-1$, where $w_{i}=t+v_{i}M$, $\mathbb{F}_{p}^{n-s}=\{v_{0}, \dots, v_{p^{n-s}-1}\}$ and $d$ is the dual of $g$.
\end{corollary}

We give an example of generalized plateaued function with affine Walsh support by using Theorem 1.

\begin{example}\label{Exameple1}
Let $t=(1,2,1,0,0)\in \mathbb{F}_{3}^{5}$, the row vectors of matrix $M$ over $\mathbb{F}_{3}$ be $(0, 0, 1, 1, 0), $ $(0, 1, 0, 0, 1), (1, 0, 0, 0, 2), (1, 1, 1, 1, 2)$ respectively, and $g: \mathbb{F}_{3}^{4}\rightarrow \mathbb{Z}_{3^3}$ be defined as $g(x_{1}, \dots, x_{4})$ $=3^{2}(x_{1}x_{3}+x_{2}x_{4})+3x_{3}+x_{4}$. Then $M$ is row full rank and $g$ is a Maiorana-McFarland generalized bent function. By Theorem 1, $f(x)=g(xM^{T})+3^{2}t\cdot x$ is a generalized $1$-plateaued function from $\mathbb{F}_{3}^{5}$ to $\mathbb{Z}_{3^3}$ with the Walsh support $S_{f}=\{w_{0}, \dots, w_{80}\}$, where $w_{i}=t+v_{i}M, 0\leq i \leq 80$, $\mathbb{F}_{3}^{4}=\{v_{0}, \dots, v_{80}\}$.
\end{example}
\subsection{Constructions of generalized plateaued functions with (non)-affine Walsh supports}
In this subsection, we provide some constructions of generalized plateaued functions with (non)-affine Walsh supports by spectral method.

Keep the same notation as in Proposition 1. If $f: \mathbb{F}_{p}^{n}\rightarrow \mathbb{Z}_{p^k}$ is a generalized $s$-plateaued function constructed by spectral method, by the proof of Proposition 1, we have $S_{f}=S$, where ordered $S=\{w_{0}, \dots, w_{p^{n-s}-1}\}$. It is easy to see that the matrix form of $S_{f}$ whose row vectors are $w_{0}, \dots, w_{p^{n-s}-1}$ can be written as
\begin{equation}\label{15}
  S_{f}=(T_{\psi_{a_{1}}}, \dots, T_{\psi_{a_{n}}}),
\end{equation}
where $\{a_{1}, \dots, a_{n}\}$ is the canonical basis of $\mathbb{F}_{p}^{n}$, that is, $a_{1}=(1, 0, \dots, 0, 0), \dots, $ $a_{n}=(0, 0, \dots, $ $0, 1)$, $\psi_{a_{i}}: V_{n-s}\rightarrow \mathbb{F}_{p}$ is defined by (12) and $T_{\psi_{a_{i}}}$ defined by (11) is the true table of $\psi_{a_{i}}$. If $\psi_{a_{i}}$ is an affine function, we say that the $i$-th column of (ordered) $S_{f}$ corresponds to an affine function. It is easy to see that if $f$ is constructed by Theorem 1, then every column of $S_{f}$ (ordered as in Theorem 1) corresponds to an affine function.

In \cite{Hodzic3}, Hod\v{z}i\'{c} \emph{et al}. designed Boolean plateaued functions with (non)-affine Walsh supports in spectral domain. As pointed out in \cite{Hodzic3}, for the constructions in spectral domain given in \cite{Hodzic3}, the Walsh supports of Boolean $s$-plateaued functions in $n$ variables, when written as matrices of form (15), contain at least $n-s$ columns corresponding to affine functions on $\mathbb{F}_{2}^{n-s}$. They proposed an open problem (Open Problem 2) to provide constructions of Boolean $s$-plateaued functions in $n$ variables whose Walsh supports, when written as matrices of form (15), contain strictly less than $n-s$ columns corresponding to affine functions. In the following constructions of generalized $s$-plateaued functions with non-affine Walsh supports, the Walsh supports, when written as matrices of form (15), can contain strictly less than $n-s$ columns corresponding to affine functions. When $p=2, k=1$, these constructions provide an answer to Open Problem 2 in \cite{Hodzic3}.

\begin{remark}\label{Remark1}
Let $f: \mathbb{F}_{p}^{n}\rightarrow \mathbb{Z}_{p^k}$ be a generalized $s$-plateaued function constructed by spectral method and the matrix form of the Walsh support $S_{f}$ be defined by (15). It is easy to see that if there exists $\psi_{a_{i}}$ for some $1\leq i \leq n$ which is neither balanced nor constant, or there exist $\psi_{a_{i_{1}}}, \dots, \psi_{a_{i_{n-s+1}}}$ for some $1\leq i_{1}, \dots, i_{n-s+1}\leq n$ such that for any nonzero $(b_{1}, \dots, b_{n-s+1})\in \mathbb{F}_{p}^{n-s+1}$, the function $\sum_{j=1}^{n-s+1}b_{j}\psi_{a_{i_{j}}}$ is not constant, then the Walsh support $S_{f}$ must be a non-affine subspace.
\end{remark}

In the first construction, we utilize the Maiorana-McFarland generalized bent function
\begin{equation*}
  f(x_{1}, x_{2})=p^{k-1}Tr_{1}^{n}(\alpha x_{1}\pi(x_{2}))+g(x_{2}), (x_{1}, x_{2})\in \mathbb{F}_{p^n}\times \mathbb{F}_{p^n},
\end{equation*}
where $\alpha \in \mathbb{F}_{p^n}^{*}$, $\pi$ is a permutation over $\mathbb{F}_{p^n}$ and $g$ is an arbitrary function from $\mathbb{F}_{p^n}$ to $\mathbb{Z}_{p^k}$. By (6), $f^{*}(x_{1}, x_{2})=p^{k-1}Tr_{1}^{n}(-\pi^{-1}(\alpha^{-1}x_{1})x_{2})+g(\pi^{-1}(\alpha^{-1}x_{1}))$ and $\mu_{f}(x_{1}, x_{2})=1$, where $f^{*}$ is the dual of $f$ and $\mu_{f}$ is defined by (3).

For the sake of simplicity, we give the functions needed in the following theorem. Let $n, s \ ( < n)$ be positive integers with $n-s=2m$, $\{\alpha_{1}, \dots, \alpha_{m}\}$ be a basis of $\mathbb{F}_{p^m}$ over $\mathbb{F}_{p}$, $\pi$ be a permutation over $\mathbb{F}_{p^m}$ and $L_{1}, \dots, L_{n-s}: \mathbb{F}_{p^m}\times \mathbb{F}_{p^m}\rightarrow \mathbb{F}_{p}$ be linearly independent linear functions. Define $d: \mathbb{F}_{p^m}\times \mathbb{F}_{p^m}\rightarrow \mathbb{Z}_{p^k}$ as
\begin{equation}\label{16}
  d(x_{1}, x_{2})=p^{k-1}Tr_{1}^{m}(\alpha_{1}x_{1}\pi(x_{2}))+g(x_{2}),
\end{equation}
where $g$ is an arbitrary function from $\mathbb{F}_{p^m}$ to $\mathbb{Z}_{p^k}$. Define $t_{i}: \mathbb{F}_{p^m}\times \mathbb{F}_{p^m}\rightarrow \mathbb{F}_{p}, 1\leq i \leq s$ as
\begin{equation}\label{17}
  t_{i}(x_{1}, x_{2})=\left\{\begin{split}
                                Tr_{1}^{m}(\beta_{i}x_{1}\pi(x_{2}))+g_{i}(x_{2})+A_{i}(x_{1}, x_{2}) & \ \ \text{if} \ m \geq 2, \\
                               g_{i}(x_{2})+A_{i}(x_{1}, x_{2}) & \ \ \text{if} \ m=1,
                             \end{split}\right.
\end{equation}
where $\beta_{i}=\sum_{j=2}^{m}c_{i, j}\alpha_{j}$ with $c_{i, j}\in \mathbb{F}_{p}$, $g_{i}$ is an arbitrary function from $\mathbb{F}_{p^m}$ to $\mathbb{F}_{p}$ and $A_{i}$ is an arbitrary affine function from $\mathbb{F}_{p^m} \times \mathbb{F}_{p^m}$ to $\mathbb{F}_{p}$. Define $h_{j}: \mathbb{F}_{p^m}\times \mathbb{F}_{p^m}\rightarrow \mathbb{F}_{p}, 1\leq j \leq n-s$ as
\begin{equation}\label{18}
  h_{j}=\left\{\begin{split}
                               \sum_{i=1}^{s}d_{j, i}t_{i}+L_{j}+b_{j} & \ \ \text{if} \ I=\emptyset, \\
                               \sum_{i \notin I}d_{j, i}t_{i}+F_{j}(t_{i_{1}}, \dots, t_{i_{|I|}})+L_{j}+b_{j}& \ \ \text{if} \ I \neq \emptyset,
                             \end{split}\right.
\end{equation}
where $I=\{1\leq i \leq s: t_{i}(x_{1}, x_{2}) \ \text{only  depends  on  variable} \ x_{2}\}$ and denote $I$ by $\{i_{1}, \dots, i_{|I|}\}$ if $I\neq \emptyset$, $d_{j, i}, b_{j} \in \mathbb{F}_{p}$ and $F_{j}$ is an arbitrary function from $\mathbb{F}_{p}^{|I|}$ to $\mathbb{F}_{p}$.
\begin{theorem}\label{Theorem2}
Let $p$ be a prime, $n, s \ ( < n)$ be positive integers with $n-s=2m$. Let $d: \mathbb{F}_{p^m}\times \mathbb{F}_{p^m}\rightarrow \mathbb{Z}_{p^k}$ be defined by (16). Let the matrix form of $S=\{w_{0}, \dots, w_{p^{n-s}-1}\}\subseteq \mathbb{F}_{p}^{n}$ be defined as
\begin{equation*}
S=\left(\begin{array}{c}
            w_{0} \\
            \dots \\
            w_{p^{n-s}-1} \\
          \end{array}
        \right)
=(T_{t_{1}}, \dots, T_{t_{s}}, T_{h_{1}}, \dots, T_{h_{n-s}}),
\end{equation*}
where $t_{i} \ (1\leq i \leq s)$ are defined by (17) and $h_{j} \ (1\leq j \leq n-s)$ are defined by (18).
Then $f(x)=(d(y)+p^{k-1}\psi_{x}(y))^{*}(0)$ is a generalized $s$-plateaued function from $\mathbb{F}_{p}^{n}$ to $\mathbb{Z}_{p^k}$ with $S$ as Walsh support, where $\psi_{x}$ is defined by (12), $(d(y)+p^{k-1}\psi_{x}(y))^{*}$ is the dual of generalized bent function $d(y)+p^{k-1}\psi_{x}(y)$.
\end{theorem}
\begin{proof}
First we show that the size of $S$ is equal to $p^{n-s}$, that is to prove
\begin{equation*}
  (t_{1}(x), \dots, h_{n-s}(x))=(t_{1}(x'), \dots, h_{n-s}(x')) \iff \ x=x',
\end{equation*}
where $x=(x_{1}, x_{2}), x'=(x'_{1}, x'_{2})\in \mathbb{F}_{p^m}\times \mathbb{F}_{p^m}$. If $(t_{1}(x), \dots, h_{n-s}(x))=(t_{1}(x'), \dots, h_{n-s}(x'))$, then by the definitions of $h_{j} \ (1\leq j \leq n-s)$, $L_{j}(x)=L_{j}(x')$ for any $1\leq j \leq n-s$. Since $L_{1}, \dots, L_{n-s}$ are linearly independent linear functions, we can see that $x=x'$.

For any $a\in \mathbb{F}_{p}^{n}$ and $0\leq i \leq p^{n-s}-1$, $\psi_{a}(v_{i})=a\cdot w_{i}=a\cdot (t_{1}(v_{i}), \dots, t_{s}(v_{i}), h_{1}(v_{i}), \dots, $ $h_{n-s}(v_{i}))$. When $m\geq 2$, by the constructions of $t_{i}, h_{j} \ (1\leq i \leq s, 1\leq j \leq n-s)$, we have $\psi_{a}(x_{1}, x_{2})=Tr_{1}^{m}(\alpha_{a}x_{1}\pi(x_{2}))+g_{a}(x_{2})+A_{a}(x_{1}, x_{2})$, where $\alpha_{a}\in \mathbb{F}_{p^m}$ is some linear combination of $\alpha_{2}, \dots, \alpha_{m}$, $g_{a}$ is some function from $\mathbb{F}_{p^m}$ to $\mathbb{F}_{p}$ and $A_{a}: \mathbb{F}_{p^m}\times \mathbb{F}_{p^m}\rightarrow \mathbb{F}_{p}$ is some affine function. Then $d(x_{1}, x_{2})+p^{k-1}\psi_{a}(x_{1}, x_{2})=p^{k-1}Tr_{1}^{m}((\alpha_{1}+\alpha_{a})x_{1}\pi(x_{2}))+(g(x_{2})+p^{k-1}g_{a}(x_{2}))+p^{k-1}A_{a}(x_{1}, x_{2})$. Since $\alpha_{1}, \dots, \alpha_{m}$ are linearly independent, $\alpha_{1}+\alpha_{a}\neq 0$. It is easy to see that if $h: V_{n}\rightarrow \mathbb{Z}_{p^k}$ is a weakly regular generalized bent function and $A: V_{n}\rightarrow \mathbb{F}_{p}$ is an arbitrary affine function, then $h+p^{k-1}A$ is also a weakly regular generalized bent function and $\mu_{h+p^{k-1}A}=\mu_{h}$. Hence, $d+p^{k-1}\psi_{a}$ is a weakly regular generalized bent function and $\mu_{d+p^{k-1}\psi_{a}}=1$ for any $a \in \mathbb{F}_{p}^{n}$. By Corollary 1, $f(x)=(d+p^{k-1}\psi_{x})^{*}(0)$ is a generalized $s$-plateaued function from $\mathbb{F}_{p}^{n}$ to $\mathbb{Z}_{p^k}$ with $S$ as Walsh support, where $\psi_{x}$ is defined by (12). When $m=1$, by the similar argument, we have the same conclusion.
\end{proof}

\begin{remark}\label{Remark2}
Note that by (6), computing $(d+p^{k-1}\psi_{x})^{*}(0)$ is routine. When $k=1$, Theorem 2 is applicable for constructing $p$-ary plateaued functions for any prime $p$. Theorem 2 extends the construction of Theorem 4.1 of \cite{Hodzic3} for Boolean plateaued functions and can be used to construct generalized plateaued functions for which the matrix form of the Walsh support defined by (15) can contain strictly less than $n-s$ columns corresponding to affine functions, which provides an answer to Open Problem 2 in \cite{Hodzic3} when $p=2, k=1$.
\end{remark}

We give two examples by using Theorem 2. The first example gives a generalized $3$-ary plateaued function and the second example gives a Boolean plateaued function. Both of them satisfy that the Walsh support is non-affine and every column of the matrix form of the Walsh support defined by (15) corresponds to a non-affine function. Furthermore, the constructed Boolean plateaued function has no nonzero linear structure. For a Boolean function $f: V_{n} \rightarrow \mathbb{F}_{2}$, if $f(x)+f(x+a)$ is a constant function, then $a$ is called a linear structure of $f$.
\begin{example}\label{Example2}
Let $p=3, k=2, n=7, s=3$. Let $z$ be the primitive element of $\mathbb{F}_{3^2}$ with $z^2+2z+2=0$. Let $d: \mathbb{F}_{3^2} \times \mathbb{F}_{3^2}\rightarrow \mathbb{Z}_{3^2}$ be defined by $d(y_{1}, y_{2})=3Tr_{1}^{2}(zy_{1}y_{2})+2(Tr_{1}^{2}(y_{2}))^{2}$. Let $t_{1}(y_{1}, y_{2})=Tr_{1}^{2}(y_{1}y_{2})$, $t_{2}(y_{1}, y_{2})=Tr_{1}^{2}(y_{2}^{2})$, $t_{3}(y_{1}, y_{2})=Tr_{1}^{2}(zy_{2}^{2})$, $h_{1}=t_{1}+Tr_{1}^{2}(y_{1})$, $h_{2}=t_{2}^{2}+Tr_{1}^{2}(zy_{1})$, $h_{3}=t_{3}^{2}+Tr_{1}^{2}(y_{2})$, $h_{4}=t_{2}+t_{3}+Tr_{1}^{2}(zy_{2})$, $(y_{1}, y_{2})\in \mathbb{F}_{3^2}\times \mathbb{F}_{3^2}$. Then by computing $(d+3\psi_{x})^{*}(0)$, we can obtain generalized $3$-plateaued function $f(x_{1}, \dots, x_{7})=2(((x_{1}+x_{4})^{2}x_{5}+(2(x_{1}+x_{4})+1)(x_{4}+x_{5})) mod \ 3)^{2}+3((x_{1}+x_{4})^{2}((x_{2}+x_{7})(2x_{4}^{2}+2x_{4}x_{5})+(x_{3}+x_{7})(x_{4}^{2}+x_{5}^{2})+2x_{4}^{2}x_{5}+2x_{4}x_{5}^{2}+x_{4}x_{7}+x_{5}x_{6}
+x_{5}x_{7}+x_{5})+(x_{1}+x_{4})((x_{2}+x_{7})(2x_{4}^{2}+x_{4}x_{5}+x_{5}^{2})+(x_{3}+x_{7})(2x_{4}^{2}+2x_{4}x_{5})+2x_{4}^{2}x_{5}+2x_{4}x_{5}^{2}+2x_{4}x_{6}
+2x_{4}x_{7}+2x_{5}x_{6}+x_{5}x_{7}+x_{5})+2x_{4}^{2}x_{5}^{2}x_{6}+(x_{2}+x_{7})(x_{4}x_{5}+2x_{5}^{2})+(x_{3}+x_{7})(2x_{4}^{2}+x_{4}x_{5}+x_{5}^{2})+x_{4}^{2}x_{5}+x_{4}^{2}x_{6}
+x_{4}x_{5}^{2}+x_{5}^{2}x_{6}+x_{4}x_{6}+x_{4}x_{7}+x_{5}x_{6}+2x_{5}x_{7}+x_{5})$ from $\mathbb{F}_{3}^{7}$ to $\mathbb{Z}_{3^2}$. Since $t_{1}$ is neither balanced nor constant, the Walsh support $S_{f}$ is not an affine subspace. Since $t_{i}\ (1\leq i \leq 3), h_{j}\ (1\leq j \leq 4)$ are all non-affine functions and the matrix form of $S_{f}$ defined by (15) is $S_{f}=(T_{t_{1}}, \dots, T_{t_{3}}, T_{h_{1}}, \dots, T_{h_{4}})$, every column of $S_{f}$ corresponds to a non-affine function.
\end{example}

\begin{example}\label{Example3}
Let $p=2, k=1, n=10, s=4$. Let $z$ be the primitive element of $\mathbb{F}_{2^3}$ with $z^{3}+z+1=0$. Let $d(y_{1}, y_{2})=Tr_{1}^{3}(z^{2}y_{1}y_{2})$, $t_{1}(y_{1}, y_{2})=Tr_{1}^{3}(y_{1}y_{2})$, $t_{2}(y_{1}, y_{2})=Tr_{1}^{3}(zy_{1}y_{2})$, $t_{3}(y_{1}, y_{2})=Tr_{1}^{3}(y_{2}^{3})$, $t_{4}(y_{1}, y_{2})=Tr_{1}^{3}(zy_{2}^{3})$, $h_{1}=t_{1}+Tr_{1}^{3}(y_{1})$, $h_{2}=t_{1}+Tr_{1}^{3}(zy_{1})$, $h_{3}=t_{2}+Tr_{1}^{3}(z^{2}y_{1})$, $h_{4}=t_{2}+Tr_{1}^{3}(y_{2})$, $h_{5}=t_{3}t_{4}+Tr_{1}^{3}(zy_{2})$, $h_{6}=t_{3}t_{4}+Tr_{1}^{3}(z^{2}y_{2})$, $(y_{1}, y_{2})\in \mathbb{F}_{2^3}\times \mathbb{F}_{2^3}$. Then by computing $(d+\psi_{x})^{*}(0)$, we can obtain Boolean $4$-plateaued function $f(x_{1}, \dots, x_{10})=(x_{1}+x_{5}+x_{6}+1)(x_{3}(x_{5}x_{7}+x_{6}x_{7}+x_{5})+x_{4}(x_{5}x_{6}+x_{5}x_{7}+x_{6}x_{7}+x_{5}+x_{7})+(x_{5}x_{6}+x_{5}x_{7})(x_{9}+x_{10})
+x_{5}x_{8}+x_{6}x_{10}+x_{7}x_{9})+((x_{1}+x_{5}+x_{6})(x_{2}+x_{7}+x_{8})+1)(x_{5}x_{9}+x_{6}x_{9}+x_{7}x_{8}+x_{7}x_{9})+(x_{1}+x_{2}+x_{5}+x_{6}+x_{7}+x_{8}+1)
(x_{3}(x_{5}x_{7}+x_{6}+x_{7})+x_{4}(x_{5}x_{7}+x_{6}x_{7}+x_{5})+x_{5}x_{6}(x_{9}+x_{10})+x_{5}x_{9}+x_{6}x_{8}+x_{6}x_{9}+x_{7}x_{9}+x_{7}x_{10})
+x_{3}(x_{5}x_{6}+x_{6}x_{7}+x_{5}+x_{6})+x_{4}(x_{5}x_{7}+x_{6}+x_{7})+(x_{6}x_{7}+x_{5}+x_{6}+x_{7})(x_{9}+x_{10})$. Since $t_{1}$ is neither balanced nor constant, the Walsh support $S_{f}$ is not an affine subspace. Since $t_{i}\ (1\leq i \leq 4), h_{j}\ (1\leq j \leq 6)$ are all non-affine functions and the matrix form of $S_{f}$ defined by (15) is $S_{f}=(T_{t_{1}}, \dots, T_{t_{4}}, T_{h_{1}}, \dots, T_{h_{6}})$, every column of $S_{f}$ corresponds to a non-affine function. Furthermore, one can verify that $S_{f}$ contains a basis of $\mathbb{F}_{2}^{10}$ and $(0, \dots, 0)\in S_{f}$, hence by Corollary 3.1 of \cite{Hodzic3}, $f$ has no nonzero linear structure.
\end{example}

In the second construction, we take advantage of the good properties of general generalized bent functions given in \cite{Mesnager9}. Let $t\geq 2$ be an integer. Let $f(x)=\sum_{i=0}^{t-1}p^{t-1-i}f_{i}(x)$ with $f_{i}: V_{n}\rightarrow \mathbb{F}_{p}, 0\leq i \leq t-1$ be a generalized bent function from $V_{n}$ to $\mathbb{Z}_{p^t}$, where $p$ is an odd prime or $p=2$ and $n$ is even. Then by Corollary 7 of \cite{Mesnager9}, for any function $G: \mathbb{F}_{p}^{t-1}\rightarrow \mathbb{Z}_{p^{k}}$, the function $p^{k-1}f_{0}+G(f_{1}, \dots, f_{t-1})$ is a generalized bent function from $V_{n}$ to $\mathbb{Z}_{p^k}$  with $\mu_{p^{k-1}f_{0}+G(f_{1}, \dots, f_{t-1})}=\mu_{f}$, where $\mu_{f}$ is defined by (3).

For the sake of simplicity, we give the functions needed in the following theorem. Let $n, s \ (< n)$ be positive integers with $n-s$ even if $p=2$, $L_{1}, \dots, L_{n-s}: V_{n-s}\rightarrow \mathbb{F}_{p}$ be linearly independent linear functions and $g=\sum_{i=0}^{t-1}p^{t-1-i}g_{i}$ with $g_{i}: V_{n-s}\rightarrow \mathbb{F}_{p}, 0\leq i \leq t-1$ be a weakly regular generalized bent function from $V_{n-s}$ to $\mathbb{Z}_{p^t}$, where $t \geq 2$. Define $d: V_{n-s}\rightarrow \mathbb{Z}_{p^k}$ as
\begin{equation}\label{19}
  d(x)=p^{k-1}g_{0}(x)+G(g_{1}(x), \dots, g_{t-1}(x)),
\end{equation}
where $G$ is an arbitrary function from $\mathbb{F}_{p}^{t-1}$ to $\mathbb{Z}_{p^k}$.
Define $t_{i}: V_{n-s}\rightarrow \mathbb{F}_{p}, 1\leq i \leq s$ as
\begin{equation}\label{20}
  t_{i}(x)=F_{i}(g_{1}(x), \dots, g_{t-1}(x)),
\end{equation}
where $F_{i}$ is an arbitrary function from $\mathbb{F}_{p}^{t-1}$ to $\mathbb{F}_{p}$. Define $h_{j}: V_{n-s}\rightarrow \mathbb{F}_{p}, 1\leq j \leq n-s$ as
\begin{equation}\label{21}
  h_{j}(x)=H_{j}(t_{1}(x), \dots, t_{s}(x))+L_{j}(x)+b_{j},
\end{equation}
where $H_{j}$ is an arbitrary function from $\mathbb{F}_{p}^{s}$ to $\mathbb{F}_{p}$ and $b_{j} \in \mathbb{F}_{p}$.
\begin{theorem}\label{Theorem3}
Let $p$ be a prime, $n, s \ (< n)$ be positive integers with $n-s$ even if $p=2$. Let $d: V_{n-s} \rightarrow \mathbb{Z}_{p^k}$ be defined by (19). Let the matrix form of $S=\{w_{0}, \dots,$ $w_{p^{n-s}-1}\}\subseteq \mathbb{F}_{p}^{n}$ be defined as
\begin{equation*}
S=\left(
          \begin{array}{c}
            w_{0} \\
            \dots \\
            w_{p^{n-s}-1} \\
          \end{array}
        \right)
=(T_{t_{1}}, \dots, T_{t_{s}}, T_{h_{1}}, \dots, T_{h_{n-s}}),
\end{equation*}
where $t_{i} \ (1\leq i \leq s)$ are defined by (20) and $h_{j} \ (1\leq j \leq n-s)$ are defined by (21).
Then $f(x)=(d(y)+p^{k-1}\psi_{x}(y))^{*}(0)$ is a generalized $s$-plateaued function from $\mathbb{F}_{p}^{n}$ to $\mathbb{Z}_{p^k}$ with $S$ as Walsh support, where $\psi_{x}$ is defined by (12), $(d(y)+p^{k-1}\psi_{x}(y))^{*}$ is the dual of generalized bent function $d(y)+p^{k-1}\psi_{x}(y)$.
\end{theorem}
\begin{proof}
With the similar argument as in the proof of Theorem 2, we have $|S|=p^{n-s}$ and for any $a \in \mathbb{F}_{p}^{n}$, $\psi_{a}(x)=G_{a}(g_{1}(x), \dots, g_{t-1}(x))+A_{a}(x)$, where $G_{a}$ is some function from $\mathbb{F}_{p}^{t-1}$ to $\mathbb{F}_{p}$ and $A_{a}: V_{n-s}\rightarrow \mathbb{F}_{p}$ is some affine function. Then $d+p^{k-1}\psi_{a}$ is a weakly regular generalized bent function and $\mu_{d+p^{k-1}\psi_{a}}=\mu_{g}$ for any $a \in \mathbb{F}_{p}^{n}$. By Corollary 1, $f(x)=(d+p^{k-1}\psi_{x})^{*}(0)$ is a generalized $s$-plateaued function from $\mathbb{F}_{p}^{n}$ to $\mathbb{Z}_{p^k}$ with $S$ as Walsh support, where $\psi_{x}$ is defined by (12).
\end{proof}

\begin{remark}\label{Remark3}
When $k=1$, Theorem 3 is applicable for constructing $p$-ary plateaued functions for any prime $p$. Theorem 3 can be used to construct generalized plateaued functions for which the matrix form of the Walsh support defined by (15) can contain strictly less than $n-s$ columns corresponding to affine functions, which provides an answer to Open Problem 2 in \cite{Hodzic3} when $p=2, k=1$.
\end{remark}

We give two examples by using Theorem 3. The first example gives a generalized $5$-ary plateaued function and the second example gives a Boolean plateaued function. Both of them satisfy that the Walsh support is non-affine and every column of the matrix form of the Walsh support defined by (15) corresponds to a non-affine function. Furthermore, the constructed Boolean plateaued function has no nonzero linear structure.
\begin{example}\label{Example4}
Let $p=5, k=3, n=4, s=1, t=2$. Let $z$ be the primitive element of $\mathbb{F}_{5^3}$ with $z^3+3z+3=0$. Let $g: \mathbb{F}_{5^3}\rightarrow \mathbb{Z}_{5^2}$ be defined by $g=5g_{0}+g_{1}$, $g_{0}, g_{1}: \mathbb{F}_{5^3}\rightarrow \mathbb{F}_{5}$, where $g_{0}(y)=Tr_{1}^{3}(2y^2)$, $g_{1}(y)=Tr_{1}^{3}(z^{16} y)$. Then by Theorem 16 of \cite{Mesnager10} and Corollary 3 of \cite{Qi}, $g$ is a weakly regular generalized bent function. Let $d: \mathbb{F}_{5^3}\rightarrow \mathbb{Z}_{5^3}$ be defined by $d(y)=25g_{0}(y)+g_{1}^{4}(y)$. Let $t_{1}(y)=g_{1}^{3}(y)$, $h_{1}(y)=t_{1}^{2}(y)+Tr_{1}^{3}(y)$, $h_{2}(y)=t_{1}^{4}(y)+Tr_{1}^{3}(zy)$, $h_{3}(y)=t_{1}(y)+Tr_{1}^{3}(z^2 y)$, $y \in \mathbb{F}_{5^3}$. Then by computing $(d+25\psi_{x})^{*}(0)$, we can obtain generalized $1$-plateaued function $f(x_{1}, \dots, x_{4})=((x_{2}-x_{4}) \ mod \ 5)^{4}+25(x_{3}(x_{2}-x_{4})^{4}+(x_{1}+x_{4})(x_{2}-x_{4})^{3}+x_{2}(x_{2}-x_{4})^{2}-x_{2}^{2}-x_{2}x_{4}+2x_{3}^{2}+x_{3}x_{4}-x_{4}^{2})$ from $\mathbb{F}_{5}^{4}$ to $\mathbb{Z}_{5^3}$. Since $t_{1}(0)=h_{j}(0)=0, 1\leq j \leq 3$ and $t_{1}, h_{1}, h_{2}, h_{3}$ are linearly independent, the Walsh support $S_{f}$ is non-affine. Since $t_{1}, h_{j}\ (1\leq j \leq 3)$ are all non-affine functions and the matrix form of $S_{f}$ defined by (15) is $S_{f}=(T_{t_{1}}, T_{h_{1}}, \dots, T_{h_{3}})$, every column of $S_{f}$ corresponds to a non-affine function.
\end{example}
\begin{example}\label{Example5}
Let $p=2, k=1, n=8, s=2, t=3$. Let $g: \mathbb{F}_{2}^{6}\rightarrow \mathbb{Z}_{2^3}$ be defined by $g=\sum_{i=0}^{2}2^{2-i}g_{i}, g_{i}: \mathbb{F}_{2}^{6}\rightarrow \mathbb{F}_{2}$, where $g_{0}(y_{1}, \dots, y_{6})=y_{1}y_{3}+y_{2}y_{4}+y_{5}y_{6}$, $g_{1}(y_{1}, \dots, y_{6})=y_{1}y_{2}y_{6}+y_{3}(y_{6}+1)$, $g_{2}(y_{1}, \dots, y_{6})=y_{3}y_{4}(y_{6}+1)+y_{1}y_{6}$. Then $g$ is a generalized Boolean bent function (which will be constructed in Example 14). Let $d=g_{0}$, $t_{1}=g_{1}$, $t_{2}=g_{2}$, $h_{1}=t_{1}t_{2}+y_{1}$, $h_{2}=t_{1}+y_{2}$, $h_{3}=t_{1}+y_{3}$, $h_{4}=t_{2}+y_{4}$, $h_{5}=t_{2}+y_{5}$, $h_{6}=t_{1}+t_{2}+y_{6}$. Then by computing $(d+\psi_{x})^{*}(0)$, we can obtain Boolean $2$-plateaued function
$f(x_{1}, \dots, x_{8})=x_{3}x_{5}+x_{4}x_{6}+x_{7}x_{8}+x_{3}(x_{7}+1)(x_{2}x_{4}+x_{4}x_{6}+x_{4}x_{8}+x_{1}+x_{5}+x_{8})+x_{5}x_{7}(x_{1}x_{6}+x_{3}x_{6}+x_{4}x_{6}+x_{6}x_{8}+x_{2}+x_{8}+1)$.
Since $t_{1}$ is neither balanced nor constant, the Walsh support $S_{f}$ is not an affine subspace. Since $t_{i}\ (1\leq i \leq 2), h_{j}\ (1 \leq j \leq 6)$ are all non-affine functions and the matrix form of $S_{f}$ defined by (15) is $S_{f}=(T_{t_{1}}, T_{t_{2}}, T_{h_{1}}, \dots, T_{h_{6}})$, every column of $S_{f}$ corresponds to a non-affine function. Furthermore, one can verify that $S_{f}$ contains a basis of $\mathbb{F}_{2}^{8}$ and $(0, \dots, 0)\in S_{f}$, hence by Corollary 3.1 of \cite{Hodzic3}, $f$ has no nonzero linear structure.
\end{example}

The third construction is used to construct plateaued functions, that is, $k=1$. In the following theorem, we utilize vectorial bent functions. A function $f=(f_{1}, \dots, f_{m}): V_{n} \rightarrow \mathbb{F}_{p}^{m}$ is called a vectorial bent function if for any nonzero vector $(a_{1}, \dots, a_{m}) \in \mathbb{F}_{p}^{m}$, $\sum_{i=1}^{m}a_{i}f_{i}(x): V_{n}\rightarrow \mathbb{F}_{p}$ is a bent function. It is known that if $f=(f_{1}, \dots, f_{m}): V_{n} \rightarrow \mathbb{F}_{p}^{m}$ is vectorial bent, then $m\leq n$ if $p$ is an odd prime, and $n$ is even and $m\leq \frac{n}{2}$ if $p=2$. The following theorem generalizes Theorem 4.3 of \cite{Hodzic3} for Boolean plateaued functions and can be applied to construct $s$-plateaued functions in $n$ variables whose Walsh supports, when written as matrices of form (15), contain strictly less than $n-s$ columns corresponding to affine functions. Thus, when $p=2$, the following theorem provide an answer to Open Problem 2 in \cite{Hodzic3}.

For the sake of simplicity, we give the functions needed in the following theorem. Let $n, s \ (< n), m$ be positive integers with $2\leq m \leq n-s$ if $p$ is an odd prime, and $n-s$ even and $2\leq m\leq \frac{n-s}{2}$ if $p=2$. Let $g=(g_{1}, \dots, g_{m})$ be a vectorial bent function from $V_{n-s}$ to $\mathbb{F}_{p}^{m}$ which satisfies that for any $(c_{2}, \dots, c_{m})\in \mathbb{F}_{p}^{m-1}$, $\mu_{g_{1}+\sum_{i=2}^{m}c_{i}g_{i}}(x)=u, x \in V_{n-s}$, where $\mu_{g_{1}+\sum_{i=2}^{m}c_{i}g_{i}}$ is defined by (3) and $u$ is a constant independent of $(c_{2}, \dots, c_{m})$. Let $L_{1}, \dots, L_{n-s}: V_{n-s}\rightarrow \mathbb{F}_{p}$ be linearly independent linear functions. Define $d: V_{n-s}\rightarrow \mathbb{F}_{p}$ as
\begin{equation}\label{22}
  d(x)=g_{1}(x).
\end{equation}
Define $t_{i}: V_{n-s}\rightarrow \mathbb{F}_{p}, 1\leq i \leq s$ as
\begin{equation}\label{23}
  t_{i}(x)=\sum_{j=2}^{m}c_{i, j}g_{j}(x)+A_{i}(x),
\end{equation}
where $c_{i, j} \in \mathbb{F}_{p}$, $A_{i}$ is an arbitrary affine function from $V_{n-s}$ to $\mathbb{F}_{p}$. Define $h_{j}: V_{n-s}\rightarrow \mathbb{F}_{p}, 1\leq j\leq n-s$ as
\begin{equation}\label{24}
  h_{j}(x)=\sum_{i=1}^{s}d_{j, i}t_{i}(x)+L_{j}(x)+b_{j},
\end{equation}
where $d_{j, i}, b_{j} \in \mathbb{F}_{p}$.
\begin{theorem}\label{Theorem4}
Let $p$ be a prime, $n, s \ (< n), m$ be positive integers with $2\leq m \leq n-s$ if $p$ is an odd prime, and $n-s$ even and $2\leq m\leq \frac{n-s}{2}$ if $p=2$. Let $d: V_{n-s} \rightarrow \mathbb{F}_{p}$ be defined by (22). Let the matrix form of $S=\{w_{0}, \dots,$ $w_{p^{n-s}-1}\}\subseteq \mathbb{F}_{p}^{n}$ be defined as
\begin{equation*}
S=\left(
          \begin{array}{c}
            w_{0} \\
            \dots \\
            w_{p^{n-s}-1} \\
          \end{array}
        \right)
=(T_{t_{1}}, \dots, T_{t_{s}}, T_{h_{1}}, \dots, T_{h_{n-s}}),
\end{equation*}
where $t_{i} \ (1\leq i \leq s)$ are defined by (23) and $h_{j} \ (1\leq j \leq n-s)$ are defined by (24). Then $f(x)=(d(y)+\psi_{x}(y))^{*}(0)$ is an $s$-plateaued function from $\mathbb{F}_{p}^{n}$ to $\mathbb{F}_{p}$ with $S$ as Walsh support, where $\psi_{x}$ is defined by (12), $(d(y)+\psi_{x}(y))^{*}$ is the dual of bent function $d(y)+\psi_{x}(y)$.
\end{theorem}
\begin{proof}
With the similar argument as in Theorem 2, we have $|S|=p^{n-s}$ and for any $a \in \mathbb{F}_{p}^{n}$, $\psi_{a}(x)=L_{a}(g_{2}(x), \dots, g_{m}(x))+A_{a}(x)$, where $L_{a}$ is some linear function from $\mathbb{F}_{p}^{m-1}$ to $\mathbb{F}_{p}$ and $A_{a}: V_{n-s}\rightarrow \mathbb{F}_{p}$ is some affine function. Then $d+\psi_{a}$ is a weakly regular bent function and $\mu_{d+\psi_{a}}=u$ for any $a\in \mathbb{F}_{p}^{n}$. By Corollary 1, $f(x)=(d(y)+\psi_{x}(y))^{*}(0)$ is an $s$-plateaued function with $S$ as Walsh support, where $\psi_{x}$ is defined by (12).
\end{proof}

\section{Generalized indirect sum for constructing generalized plateaued functions}
In \cite{Carlet3}, Carlet provided the so-called indirect sum for constructing Boolean bent functions, which is also applicable for constructing Boolean plateaued functions. The indirect sum use arbitrary two Boolean plateaued functions and arbitrary two Boolean bent functions as building blocks. As far as we know, up to now, there is no $p$-ary version of the indirect sum, where $p$ is an odd prime. In this section, we consider to present a $p$-ary version of the indirect sum for constructing generalized plateaued functions (we call it generalized indirect sum), which extends the indirect sum \cite{Carlet3}. Differently to Boolean case, for $p$-ary case, the bent functions used in the generalized indirect sum need to satisfy some extra conditions. We will illustrate that although the corresponding conditions for used bent functions seem harsh, we can still provide abundant bent functions that satisfy such conditions.
\begin{theorem}\label{Theorem5}
Let $p$ be a prime, $k, t, r, m$ be positive integers, $s \ ( \leq r)$ be a non-negative integer and $m$ be even for $p=2$, $r+s$ be even for $p=2, k=1$. Let $f_{c} \ (c \in \mathbb{F}_{p}^{t}): V_{r} \rightarrow \mathbb{Z}_{p^k}$ be generalized $s$-plateaued functions. Let $g_{i} \ (0\leq i \leq t): V_{m} \rightarrow \mathbb{F}_{p}$ be bent functions which satisfy that for any $j=(j_{1}, \dots, j_{t})\in \mathbb{F}_{p}^{t}$,
$$G_{j}\triangleq (1-j_{1}-\dots-j_{t})g_{0}+j_{1}g_{1}+\dots+j_{t}g_{t}$$ is a bent function and $$G_{j}^{*}=(1-j_{1}-\dots-j_{t})g_{0}^{*}+j_{1}g_{1}^{*}+\dots+j_{t}g_{t}^{*}$$ and $\mu_{G_{j}}=u$, where $\mu_{G_{j}}$ is defined by (3) and $u$ is a function from $V_{m}$ to $\{\pm 1, \pm \sqrt{-1}\}$ independent of $j$. Let $g: \mathbb{F}_{p}^{t} \rightarrow \mathbb{Z}_{p^k}$ be an arbitrary function. Then $F: V_{r}\times V_{m} \rightarrow \mathbb{Z}_{p^k}$ defined as $F(x,y)=f_{(g_{0}(y)-g_{1}(y), \dots, g_{0}(y)-g_{t}(y))}(x)+p^{k-1}g_{0}(y)+g(g_{0}(y)-g_{1}(y), \dots, g_{0}(y)-g_{t}(y))$ is a generalized $s$-plateaued function.
\end{theorem}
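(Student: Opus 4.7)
The plan is to compute the Walsh transform $W_h(a,b)$ at an arbitrary point $(a,b)\in V_r\times V_m$ and show that $|W_h(a,b)|\in\{0,p^{(r+m+s)/2}\}$, which is exactly the generalized $s$-plateaued condition on $V_r\times V_m$.

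First I would expand $W_h(a,b)=\sum_{x,y}\zeta_{p^k}^{h(x,y)}\zeta_p^{-\langle a,x\rangle-\langle b,y\rangle}$. Using $\zeta_{p^k}^{p^{k-1}g_{0}(y)}=\zeta_p^{g_{0}(y)}$ and separating the inner sum over $x\in V_r$ to recognize $W_{f_{\delta(y)}}(a)$ with $\delta(y):=(g_{0}(y)-g_{1}(y),\dots,g_{0}(y)-g_{t}(y))\in\mathbb{F}_p^t$, then stratifying the outer sum over $y$ by the value of $\delta(y)$, I obtain
\[
W_h(a,b)=\sum_{j\in\mathbb{F}_p^t}\zeta_{p^k}^{g(j)}\,W_{f_j}(a)\cdot S_j(b),\qquad S_j(b)=\sum_{y\in V_m,\,\delta(y)=j}\zeta_p^{g_{0}(y)-\langle b,y\rangle}.
\]

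Next I would handle $S_j(b)$ via the indicator representation $\mathbf{1}[\delta(y)=j]=p^{-t}\sum_{c\in\mathbb{F}_p^t}\zeta_p^{-\langle c,\delta(y)-j\rangle}$. Expanding $-\langle c,\delta(y)\rangle+g_{0}(y)=(1-c_1-\cdots-c_t)g_{0}(y)+c_1g_{1}(y)+\cdots+c_tg_{t}(y)=G_c(y)$ makes the $y$-sum exactly the Walsh transform $W_{G_c}(b)$, so that
\[
S_j(b)=\frac{1}{p^t}\sum_{c\in\mathbb{F}_p^t}\zeta_p^{\langle c,j\rangle}\,W_{G_c}(b).
\]
Invoking the hypotheses $W_{G_c}(b)=u(b)p^{m/2}\zeta_p^{G_c^{*}(b)}$ with $G_c^{*}(b)=(1-\sum_i c_i)g_{0}^{*}(b)+\sum_i c_i g_{i}^{*}(b)$ and $u(b)$ independent of $c$, I can factor $u(b)p^{m/2}\zeta_p^{g_{0}^{*}(b)}$ out of the $c$-sum and split what remains into a product over $i$ of one-variable character sums $\sum_{c_i\in\mathbb{F}_p}\zeta_p^{c_i(j_i-g_{0}^{*}(b)+g_{i}^{*}(b))}$. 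Each factor equals $p$ when $j_i=g_{0}^{*}(b)-g_{i}^{*}(b)$ and $0$ otherwise, so $S_j(b)$ vanishes except for the unique $j=j^{*}(b):=(g_{0}^{*}(b)-g_{1}^{*}(b),\dots,g_{0}^{*}(b)-g_{t}^{*}(b))$, where $S_{j^{*}(b)}(b)=u(b)p^{m/2}\zeta_p^{g_{0}^{*}(b)}$.

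Substituting back, only the $j=j^{*}(b)$ term survives in the outer sum, giving
\[
W_h(a,b)=u(b)\,p^{m/2}\,\zeta_p^{g_{0}^{*}(b)}\,\zeta_{p^k}^{g(j^{*}(b))}\,W_{f_{j^{*}(b)}}(a).
\]
Since $|u(b)|=1$ and each $f_{j}$ is generalized $s$-plateaued on $V_r$, $|W_{f_{j^{*}(b)}}(a)|\in\{0,p^{(r+s)/2}\}$, yielding $|W_h(a,b)|\in\{0,p^{(r+m+s)/2}\}$ as required. The main obstacle I anticipate is the bookkeeping in the character-sum step: one must carefully regroup the coefficients of $c_i$ in the exponent so that the inner $y$-sum is genuinely $W_{G_c}(b)$ for the specific affine combination in the hypothesis, and track the sign and $\sqrt{-1}$ factors encoded in $u$ uniformly across the four cases depending on the parity of $m$ and the residue of $p$ modulo $4$. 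Once those conventions are aligned, the remaining steps are routine applications of orthogonality.
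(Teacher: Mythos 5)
Your proposal is correct and follows essentially the same route as the paper's proof: stratify the $y$-sum by the value of $(g_0(y)-g_1(y),\dots,g_0(y)-g_t(y))$, convert the indicator into a character sum so that the inner $y$-sum becomes $W_{G_c}(b)$, apply the hypothesis $W_{G_c}(b)=u(b)p^{m/2}\zeta_p^{G_c^*(b)}$ with the affine dual, and collapse the resulting product of one-variable character sums to isolate the single surviving index $j^*(b)=(g_0^*(b)-g_1^*(b),\dots,g_0^*(b)-g_t^*(b))$. Your final expression for $W_h(a,b)$ matches Equation (21) of the paper exactly, so no gap remains.
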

\begin{proof}
For any $(a, b) \in V_{r}\times V_{m}$, we have
\begin{equation}
  \begin{split}
    &W_{F}(a, b)\\
    &=\sum_{x \in V_{r}, y \in V_{m}}\zeta_{p^k}^{f_{(g_{0}(y)-g_{1}(y), \dots, g_{0}(y)-g_{t}(y))}(x)+p^{k-1}g_{0}(y)+g(g_{0}(y)-g_{1}(y), \dots, g_{0}(y)-g_{t}(y))}\zeta_{p}^{-\langle a, x\rangle -\langle b, y \rangle}\\
    & =\sum_{i_{1}, \dots, i_{t}\in \mathbb{F}_{p}}\sum_{y: g_{0}(y)-g_{j}(y)=i_{j}, 1\leq j\leq t}\sum_{x \in V_{r}}\zeta_{p^k}^{f_{(i_{1}, \dots, i_{t})}(x)+g(i_{1}, \dots, i_{t})}\zeta_{p}^{g_{0}(y)-\langle a, x \rangle-\langle b, y\rangle}\\
    & =p^{-t}\sum_{i_{1}, \dots, i_{t} \in \mathbb{F}_{p}}\zeta_{p^k}^{g(i_{1}, \dots, i_{t})}W_{f_{(i_{1}, \dots, i_{t})}}(a)\sum_{y \in V_{m}}\zeta_{p}^{g_{0}(y)-\langle b, y\rangle}\sum_{j_{1}\in \mathbb{F}_{p}}\zeta_{p}^{(i_{1}-(g_{0}-g_{1})(y))j_{1}}\dots \sum_{j_{t}\in \mathbb{F}_{p}}\zeta_{p}^{(i_{t}-(g_{0}-g_{t})(y))j_{t}}\\
    &=p^{-t}\sum_{i_{1}, \dots, i_{t} \in \mathbb{F}_{p}}\zeta_{p^k}^{g(i_{1}, \dots, i_{t})}W_{f_{(i_{1}, \dots, i_{t})}}(a)\sum_{j_{1}, \dots, j_{t}\in \mathbb{F}_{p}}\zeta_{p}^{i_{1}j_{1}+\dots+i_{t}j_{t}}W_{G_{(j_{1}, \dots, j_{t})}}(b)\\
    & =u(b)p^{\frac{m}{2}}p^{-t}\zeta_{p}^{g_{0}^{*}(b)}\sum_{i_{1}, \dots, i_{t} \in \mathbb{F}_{p}}\zeta_{p^k}^{g(i_{1}, \dots, i_{t})}W_{f_{(i_{1}, \dots, i_{t})}}(a)\sum_{j_{1}\in \mathbb{F}_{p}}\zeta_{p}^{(g_{1}^{*}(b)-g_{0}^{*}(b)+i_{1})j_{1}}\dots \sum_{j_{t}\in \mathbb{F}_{p}}\zeta_{p}^{(g_{t}^{*}(b)-g_{0}^{*}(b)+i_{t})j_{t}}\\
    & =u(b)p^{\frac{m}{2}}\zeta_{p}^{g_{0}^{*}(b)}\zeta_{p^k}^{g(g_{0}^{*}(b)-g_{1}^{*}(b), \dots, g_{0}^{*}(b)-g_{t}^{*}(b))}W_{f_{(g_{0}^{*}(b)-g_{1}^{*}(b), \dots, g_{0}^{*}(b)-g_{t}^{*}(b))}}(a),
  \end{split}
  \end{equation}
where the fifth equation is obtained by the properties of bent functions $g_{i} \ (0\leq i \leq t)$. By (25), we can see that $F: V_{r}\times V_{m}\rightarrow \mathbb{Z}_{p^k}$ is a generalized $s$-plateaued function if $f_{c}, c\in \mathbb{F}_{p}^{t}$ are generalized $s$-plateaued functions from $V_{r}$ to $\mathbb{Z}_{p^k}$.
\end{proof}

If $s=0$, then Theorem 5 can be used to construct (non)-weakly regular generalized bent functions and the duals can be given. The following corollary is an immediate consequence of Theorem 5 and its proof.

\begin{corollary}\label{Corollary3}
If $s=0$, then the function $F: V_{r}\times V_{m}\rightarrow \mathbb{Z}_{p^k}$ constructed by Theorem 5 is a generalized bent function and its dual $F^{*}(x, y)=f^{*}_{(g_{0}^{*}(y)-g_{1}^{*}(y), \dots, g_{0}^{*}(y)-g_{t}^{*}(y))}(x)+p^{k-1}g_{0}^{*}(y)+g(g_{0}^{*}(y)-g_{1}^{*}(y), \dots, g_{0}^{*}(y)-g_{t}^{*}(y))$.
Furthermore, $F$ is non-weakly regular if any one of the following conditions holds:

(1) There exists $i\in \mathbb{F}_{p}^{t}$ such that $f_{i}$ is non-weakly regular and $|\{b \in V_{m}: (g_{0}^{*}(b)-g_{1}^{*}(b), \dots, $ $g_{0}^{*}(b)-g_{t}^{*}(b))=i\}|\geq 1$;

(2) $u$ is a constant function and there exist $i_{1}\neq i_{2}\in \mathbb{F}_{p}^{t}$ such that $f_{i_{1}}, f_{i_{2}}$ are weakly regular with $\mu_{f_{i_{1}}}\neq \mu_{f_{i_{2}}}$ and $|\{b \in V_{m}: (g_{0}^{*}(b)-g_{1}^{*}(b), \dots, $ $g_{0}^{*}(b)-g_{t}^{*}(b))=i_{j}\}|\geq 1$ for $j=1, 2$;

(3) $u$ is not a constant function and $\mu_{f_{i}}=c, i \in \mathbb{F}_{p}^{t}$, where $c$ is a constant function independent of $i$.
\end{corollary}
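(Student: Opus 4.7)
The plan is to read off both the dual function and the sign function $\mu_h$ directly from equation (21) of Theorem~5, and then dispatch each of the three sufficient conditions by exhibiting inputs at which $\mu_h$ takes differing values. Since $s = 0$, every $f_i : V_r \to \mathbb{Z}_{p^k}$ is a generalized bent function, so by (3) we may write $W_{f_i}(a) = \mu_{f_i}(a)\, p^{r/2}\, \zeta_{p^k}^{f_i^*(a)}$ for every $a \in V_r$. Substituting this into the last line of (21), absorbing the factor $\zeta_p^{g_0^*(b)} = \zeta_{p^k}^{p^{k-1} g_0^*(b)}$ into the exponent of $\zeta_{p^k}$, and writing $\phi(b) := (g_0^*(b) - g_1^*(b), \ldots, g_0^*(b) - g_t^*(b))$, equation (21) becomes
\begin{equation*}
  W_h(a, b) \;=\; u(b)\, \mu_{f_{\phi(b)}}(a)\, p^{(r+m)/2}\, \zeta_{p^k}^{\,f^*_{\phi(b)}(a) + p^{k-1} g_0^*(b) + g(\phi(b))}.
\end{equation*}
Comparison with (3) applied to the bent function $h : V_r \times V_m \to \mathbb{Z}_{p^k}$ simultaneously reads off the claimed formula for $h^*$ and yields the key identity $\mu_h(a, b) = u(b)\, \mu_{f_{\phi(b)}}(a)$.

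Given this explicit expression for $\mu_h$, each of the three conditions forces $\mu_h$ to be non-constant. In case (1), fix $i$ with $f_i$ non-weakly regular and choose any $b_0$ with $\phi(b_0) = i$; then $a \mapsto \mu_h(a, b_0) = u(b_0)\, \mu_{f_i}(a)$ is a nonzero scalar multiple of the non-constant function $\mu_{f_i}$, hence non-constant in $a$. In case (2), the constancy $u \equiv u_0$ collapses the formula to $\mu_h(a, b) = u_0\, \mu_{f_{\phi(b)}}$, which depends only on $\phi(b)$; choosing preimages $b_1, b_2$ of $i_1, i_2$ under $\phi$ produces the two distinct values $u_0 \mu_{f_{i_1}}$ and $u_0 \mu_{f_{i_2}}$. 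In case (3), the common value $c$ of all $\mu_{f_i}$ gives $\mu_h(a, b) = c\, u(b)$, which is non-constant in $b$ because $u$ is non-constant by hypothesis.

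There is no substantive obstacle; the only point requiring care is confirming that $\mu_h$, defined through (3) for $h$ on $V_r \times V_m$, takes values in the correct root-of-unity set (either $\{\pm 1\}$ or $\{\pm \sqrt{-1}\}$, or just $\{1\}$ in the binary case). This compatibility is automatic, because the sign factors collected in each $\mu_{f_i}$ and in $u$ are dictated by exactly the same parity/residue conventions as those governing $\mu_h$ on $V_r \times V_m$ under the parity hypotheses of Theorem~5, so their product lands in the required set. Once this routine check is made, the identifications of $h^*$ and $\mu_h$ above complete the proof of Corollary~2.
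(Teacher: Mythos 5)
Your proposal is correct and follows exactly the route the paper intends: the paper states Corollary 2 as an immediate consequence of equation (21) in Theorem 5, and your argument simply makes explicit the identification $W_h(a,b)=u(b)\mu_{f_{\phi(b)}}(a)p^{(r+m)/2}\zeta_{p^k}^{f^*_{\phi(b)}(a)+p^{k-1}g_0^*(b)+g(\phi(b))}$ and the resulting formula $\mu_h(a,b)=u(b)\mu_{f_{\phi(b)}}(a)$, from which the three non-weak-regularity conditions follow as you describe.
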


Obviously arbitrary two Boolean bent functions $g_{0}, g_{1}$ satisfy the conditions of Theorem 5. When $p=2, k=t=1$, $f_{0}, f_{1}$ are Boolean plateaued functions, $g_{0}, g_{1}$ are Boolean bent functions and $g=0$, the Boolean plateaued function constructed by Theorem 5 is $F(x, y)=f_{g_{0}(y)+g_{1}(y)}(x)+g_{0}(y)=g_{0}(y)+f_{0}(x)+(f_{0}(x)+f_{1}(x))(g_{0}(y)+g_{1}(y))$. It is just the well-known indirect sum \cite{Carlet3}. Hence, Theorem 5 can be seen as an extension of the indirect sum. Also note that Theorem 4.2 (i) of \cite{Riera} for generalized Boolean plateaued functions as a generalization of the indirect sum is a special case of Theorem 5. When $p$ is an odd prime or $t\geq 2$, the conditions in Theorem 5 for bent functions $g_{i} \ (0\leq i \leq t)$ seem harsh. In the following, we illustrate that when $p$ is an odd prime or $t\geq 2$, although the conditions for used bent functions $g_{i} \ (0\leq i \leq t)$ seem harsh, we can still provide abundant bent functions that satisfy such conditions.

\begin{itemize}
  \item Let $g_{i} \ (0\leq i \leq t): \mathbb{F}_{p}^{m} \times \mathbb{F}_{p}^{m}\rightarrow \mathbb{F}_{p}$ be defined as
  \begin{equation}\label{26}
   g_{0}(y_{1}, y_{2})=y_{1}\cdot \pi(y_{2}), g_{i}(y_{1}, y_{2})=g_{0}(y_{1}, y_{2})+h_{i}(y_{2}), 1\leq i \leq t,
  \end{equation}
  where $\pi$ is a permutation over $\mathbb{F}_{p}^{m}$ and $h_{i} \ (1\leq i \leq t): \mathbb{F}_{p}^{m}\rightarrow \mathbb{F}_{p} $ are arbitrary functions. Then $g_{i} \ (0\leq i \leq t)$ are Maiorana-McFarland bent functions and by (7), it is easy to verify that $g_{i} \ (0\leq i \leq t)$ satisfy the conditions of Theorem 5.
  \item Let $g_{i} \ (0\leq i \leq t): \mathbb{F}_{p^m} \times \mathbb{F}_{p^m}\rightarrow \mathbb{F}_{p}$ be defined as
  \begin{equation}\label{27}
     g_{i}(y_{1}, y_{2})=Tr_{1}^{m}(\alpha_{i} G(y_{1}y_{2}^{p^m-2})),
  \end{equation}
  where $m\geq t+1$, $G$ is a permutation over $\mathbb{F}_{p^m}$ with $G(0)=0$ and $\alpha_{0}, \dots, \alpha_{t} \in \mathbb{F}_{p^m}$ are linearly independent over $\mathbb{F}_{p}$. Then $g_{i} \ (0\leq i \leq t)$ are $PS_{ap}$ bent functions and by (8), it is easy to verify that $g_{i} \ (0\leq i \leq t)$ satisfy the conditions of Theorem 5.
\end{itemize}

Since the bent functions $g_{i} \ (0\leq i \leq t)$ defined by (26) (respectively, (27)) satisfy the corresponding conditions in Theorem 5, we obtain the following constructions from Theorem 5.

\begin{corollary}\label{Corollary4}
Let $p$ be a prime, $k, t, r, m$ be positive integers, $s\ (\leq r)$ be a non-negative integer and $r+s$ be even for $p=2, k=1$. Let $f_{c} \ (c \in \mathbb{F}_{p}^{t}): V_{r} \rightarrow \mathbb{Z}_{p^k}$ be generalized $s$-plateaued functions, $g_{i} \ (0\leq i \leq t): \mathbb{F}_{p}^{m}\times \mathbb{F}_{p}^{m}\rightarrow \mathbb{F}_{p}$ be defined by (26), and $g: \mathbb{F}_{p}^{t}\rightarrow \mathbb{Z}_{p^k}$ be an arbitrary function. Then $F: V_{r}\times \mathbb{F}_{p}^{m}\times \mathbb{F}_{p}^{m}\rightarrow \mathbb{Z}_{p^k}$ defined as $F(x, y)=f_{(g_{0}(y)-g_{1}(y), \dots, g_{0}(y)-g_{t}(y))}(x)+p^{k-1}g_{0}(y)+g(g_{0}(y)-g_{1}(y), \dots, g_{0}(y)-g_{t}(y))$ is a generalized $s$-plateaued function.
\end{corollary}

\begin{corollary}\label{Corollary5}
Let $p$ be a prime, $k, t, r, m$ be positive integers with $m\geq t+1$, $s\ (\leq r)$ be a non-negative integer and $r+s$ be even for $p=2, k=1$. Let $f_{c} \ (c \in \mathbb{F}_{p}^{t}): V_{r} \rightarrow \mathbb{Z}_{p^k}$ be generalized $s$-plateaued functions, $g_{i} \ (0\leq i \leq t): \mathbb{F}_{p^m}\times \mathbb{F}_{p^m}\rightarrow \mathbb{F}_{p}$ be defined by (27), and $g: \mathbb{F}_{p}^{t}\rightarrow \mathbb{Z}_{p^k}$ be an arbitrary function. Then $F: V_{r}\times \mathbb{F}_{p^m}\times \mathbb{F}_{p^m}\rightarrow \mathbb{Z}_{p^k}$ defined as $F(x, y)=f_{(g_{0}(y)-g_{1}(y), \dots, g_{0}(y)-g_{t}(y))}(x)+p^{k-1}g_{0}(y)+g(g_{0}(y)-g_{1}(y), \dots, g_{0}(y)-g_{t}(y))$ is a generalized $s$-plateaued function.
\end{corollary}

By using Corollary 5, we show that the functions $g_{i} \ (0\leq i \leq t)$ defined below also satisfy the corresponding conditions in Theorem 5.
\begin{itemize}
  \item Let $g_{i} \ (0\leq i \leq t): V_{m}\times \mathbb{F}_{p^{t+2}}\times \mathbb{F}_{p^{t+2}}\rightarrow \mathbb{F}_{p}$ be defined as
  \begin{equation}\label{28}
    g_{i}(x, y_{1}, y_{2})=h_{Tr_{1}^{t+2}(\beta G(y_{1} y_{2}^{p^{t+2}-2}))}(x)+Tr_{1}^{t+2}(\alpha_{i} G(y_{1}y_{2}^{p^{t+2}-2})),
  \end{equation}
  where $m$ is even if $p=2$, $h_{c} \ (c \in \mathbb{F}_{p}): V_{m}\rightarrow \mathbb{F}_{p}$ are bent functions, $G$ is a permutation over $\mathbb{F}_{p^{t+2}}$ with $G(0)=0$ and $\beta, \alpha_{0}, \dots, \alpha_{t} \in \mathbb{F}_{p^{t+2}}$ are linearly independent over $\mathbb{F}_{p}$.
\end{itemize}
\begin{lemma}\label{Lemma1}
Let $g_{i} \ (0\leq i \leq t): V_{m}\times \mathbb{F}_{p^{t+2}}\times \mathbb{F}_{p^{t+2}}\rightarrow \mathbb{F}_{p}$ be defined by (28). Then for any $j=(j_{1}, \dots, j_{t})\in \mathbb{F}_{p}^{t}$, $G_{j}=(1-j_{1}-\dots-j_{t})g_{0}+j_{1}g_{1}+\dots+j_{t}g_{t}$ is a bent function and $G_{j}^{*}=(1-j_{1}-\dots-j_{t})g_{0}^{*}+j_{1}g_{1}^{*}+\dots+j_{t}g_{t}^{*}$ and $\mu_{G_{j}}=u$, where $\mu_{G_{j}}$ is defined by (3) and $u$ is a function from $V_{m}\times \mathbb{F}_{p^{t+2}}\times \mathbb{F}_{p^{t+2}}$ to $\{\pm 1, \pm \sqrt{-1}\}$ independent of $j$.
\end{lemma}
\begin{proof}
For any $0\leq i \leq t$, $g_{i}(x, y_{1}, y_{2})=h_{l_{0}(y_{1}, y_{2})-l_{1}(y_{1}, y_{2})}(x)+l_{0}(y_{1}, y_{2})$, where $l_{0}(y_{1}, y_{2})=Tr_{1}^{t+2}(\alpha_{i} G(y_{1}y_{2}^{p^{t+2}-2})), l_{1}(y_{1}, y_{2})=Tr_{1}^{t+2}((\alpha_{i}-\beta)G(y_{1}y_{2}^{p^{t+2}-2}))$. Since for any $0\leq i \leq t$, $\alpha_{i}, \alpha_{i}-\beta$ are linearly independent over $\mathbb{F}_{p}$, by Corollary 5, $g_{i} \ (0\leq i \leq t)$ defined by (28) are bent functions. Further by (8) and (25), we can see that $g_{i}^{*}(x, y_{1}, y_{2})=h_{Tr_{1}^{t+2}(\beta G(-y_{1}^{p^{t+2}-2}y_{2}))}^{*}(x)+Tr_{1}^{t+2}(\alpha_{i} G(-y_{1}^{p^{t+2}-2}y_{2}))$. For any $j=(j_{1}, \dots, j_{t})\in \mathbb{F}_{p}^{t}$, $G_{j}(x, y_{1}, y_{2})=h_{Tr_{1}^{t+2}(\beta G(y_{1} y_{2}^{p^{t+2}-2}))}(x)+Tr_{1}^{t+2}(\alpha G(y_{1}y_{2}^{p^{t+2}-2}))$, where $\alpha=(1-j_{1}-\dots-j_{t})\alpha_{0}+j_{1}\alpha_{1}+\dots+j_{t}\alpha_{t}$. Also by Corollary 5 and (8), (25), we can see that $G_{j}$ is a bent function with $G_{j}^{*}(x, y_{1}, y_{2})=h_{Tr_{1}^{t+2}(\beta G(-y_{1}^{p^{t+2}-2}y_{2}))}^{*}(x)+Tr_{1}^{t+2}(\alpha G(-y_{1}^{p^{t+2}-2}y_{2}))$ and $\mu_{G_{j}}(x, y_{1}, y_{2})=\mu_{h_{Tr_{1}^{t+2}(\beta G(-y_{1}^{p^{t+2}-2}y_{2}))}}(x)$, which is independent of $j$.
\end{proof}

Since the bent functions $g_{i} \ (0\leq i \leq t)$ defined by (28) satisfy the corresponding conditions in Theorem 5, we obtain the following construction from Theorem 5.

\begin{corollary}\label{Corollary5}
Let $p$ be a prime, $k, t, r, m$ be positive integers, $s\ (\leq r)$ be a non-negative integer and $m$ be even for $p=2$, $r+s$ be even for $p=2, k=1$. Let $f_{c} \ (c \in \mathbb{F}_{p}^{t}): V_{r} \rightarrow \mathbb{Z}_{p^k}$ be generalized $s$-plateaued functions, $g_{i} \ (0\leq i \leq t): V_{m}\times \mathbb{F}_{p^{t+2}}\times \mathbb{F}_{p^{t+2}}\rightarrow \mathbb{F}_{p}$ be defined by (28), and $g: \mathbb{F}_{p}^{t}\rightarrow \mathbb{Z}_{p^k}$ be an arbitrary function. Then $F(x, y)=f_{(g_{0}(y)-g_{1}(y), \dots, g_{0}(y)-g_{t}(y))}(x)+p^{k-1}g_{0}(y)+g(g_{0}(y)-g_{1}(y), \dots, g_{0}(y)-g_{t}(y))$ is a generalized $s$-plateaued function from $V_{r}\times V_{m}\times \mathbb{F}_{p^{t+2}}\times \mathbb{F}_{p^{t+2}}$ to $\mathbb{Z}_{p^k}$.
\end{corollary}

Let $s=0$, then the above corollaries provide constructions of generalized bent functions.

\begin{theorem}\label{Theorem6}
When $s=0$, then the function $F$ constructed by Corollary 4 (resp., Corollary 5, Corollary 6) is a generalized bent function and its dual $F^{*}(x, y)=f^{*}_{(g_{0}^{*}(y)-g_{1}^{*}(y), \dots, g_{0}^{*}(y)-g_{t}^{*}(y))}(x)+p^{k-1}g_{0}^{*}(y)+g(g_{0}^{*}(y)-g_{1}^{*}(y), \dots, $ $g_{0}^{*}(y)-g_{t}^{*}(y))$.
\end{theorem}
\begin{remark}\label{Remark4}
When $k=1$, Corollaries 4, 5, 6 are applicable for constructing $p$-ary plateaued functions for any prime $p$. When $s=0, k=1, m=t$, $g=0$ and bent functions $g_{i} \ (0\leq i \leq t)$ are defined by (26) with $h_{i}(y_{2})=-y_{2, i}, 1\leq i \leq t$, where $y_{2}=(y_{2, 1}, \dots, y_{2, t})\in \mathbb{F}_{p}^{t}$, then the corresponding construction in Corollary 4 is just the canonical way to construct the so-called Generalized Maiorana-McFarland bent functions given in (6) of \cite{Cesmelioglu3}. By Theorem 2 and its proof of \cite{Cesmelioglu3}, any bent function in the completed Generalized Maiorana-McFarland class (that is, equivalent to a Generalized Maiorana-McFarland bent function) is equivalent to an Maiorana-McFarland bent function or a bent function of the form (6) of \cite{Cesmelioglu3}. Hence, any bent function in the completed Generalized Maiorana-McFarland class and not in the completed Maiorana-McFarland class is equivalent to a bent function which can be constructed by the generalized indirect sum.
\end{remark}

We give some examples. The third example gives a non-weakly regular bent function which is not in the completed Generalized Maiorana-McFarland class.

\begin{example}\label{Example6}
Let $f_{c} \ (c \in \mathbb{F}_{5}^{2}): \mathbb{F}_{5}^{4}\rightarrow \mathbb{Z}_{5^3}$ be defined as $f_{(0, 1)}(x_{1}, \dots, x_{4})=5^{2}(x_{1}^{2}+x_{2}^{2})$ and $f_{c}(x_{1}, \dots, x_{4})=5^{2}(x_{1}^{2}+2x_{3}^{2})$ if $c\neq (0, 1)$. Then $f_{c} \ (c \in \mathbb{F}_{5}^{2})$ are (trivial) generalized $2$-plateaued functions. Let $g_{i} \ (0\leq i \leq 2): \mathbb{F}_{5}^{4}\rightarrow \mathbb{F}_{5}$ be defined as $g_{0}(x_{1}, \dots, x_{4})=x_{1}x_{3}+x_{2}x_{4}$, $g_{1}(x_{1}, \dots, x_{4})=x_{1}x_{3}+x_{2}x_{4}+4x_{3}^{3}$, $g_{2}(x_{1}, \dots, x_{4})=x_{1}x_{3}+x_{2}x_{4}+4x_{4}^{3}$, and $g: \mathbb{F}_{5}^{2}\rightarrow \mathbb{Z}_{5^3}$ be defined as $g(x_{1}, x_{2})=5x_{1}+x_{2}$. Then the function $F: \mathbb{F}_{5}^{8}\rightarrow \mathbb{Z}_{5^3}$ constructed by Corollary 4 is a (non-trivial) generalized $2$-plateaued function. One can verify that its Walsh support is not an affine subspace.
\end{example}

\begin{example}\label{Example7}
Let $f_{0}(x_{1}, x_{2}, x_{3})=7(x_{1}^{2}+x_{2}^{2})$, $f_{1}(x_{1}, x_{2}, x_{3})=7(x_{1}^{2}+3x_{2}^{2})$, $f_{2}(x_{1}, $ $x_{2}, x_{3})=7(x_{1}^{2}+2x_{3}^{2})$, $f_{3}(x_{1}, x_{2}, x_{3})=7(x_{1}^{2}+5x_{3}^{2})$, $f_{4}(x_{1}, x_{2}, x_{3})=7(x_{2}^{2}+4x_{3}^{2})$, $f_{5}(x_{1}, x_{2}, x_{3})=7(x_{2}^{2}+6x_{3}^{2})$, $f_{6}(x_{1}, x_{2}, x_{3})=7(x_{1}^{2}+3x_{2}^{2}+x_{3})$. Then $f_{i}\ (i\in \mathbb{F}_{7}): \mathbb{F}_{7}^{3} \rightarrow \mathbb{Z}_{7^2}$ are (trivial) generalized $1$-plateaued functions. Let $g_{0}(y_{1}, y_{2})=Tr_{1}^{2}(y_{1}y_{2}^{47}), g_{1}(y_{1}, y_{2})=Tr_{1}^{2}(zy_{1}y_{2}^{47}), (y_{1}, y_{2}) \in \mathbb{F}_{7^2}\times \mathbb{F}_{7^2}$, where $z$ is the primitive element of $\mathbb{F}_{7^2}$ with $z^2+6z+3=0$. Let $g: \mathbb{F}_{7}\rightarrow \mathbb{Z}_{7^2}$ be defined as $g(x)=x^{5}+2x^{3}$. Then the function $F: \mathbb{F}_{7}^{3}\times \mathbb{F}_{7^2}\times \mathbb{F}_{7^2}\rightarrow \mathbb{Z}_{7^2}$ constructed by Corollary 5 is a (non-trivial) generalized $1$-plateaued function and one can verify that the Walsh support is not an affine subspace.
\end{example}

\begin{example}\label{Example8}
Let $\xi$ be the primitive element of $\mathbb{F}_{3^4}$ with $\xi^{4}+2\xi^{3}+2=0$. Let $z$ be the primitive element of $\mathbb{F}_{3^2}$ with $z^2+z+2=0$. Let $f_{0}(x)=Tr_{1}^{4}(x^{34}+x^{2})$, $f_{1}(x)=Tr_{1}^{4}(x^{2})$, $f_{2}(x)=Tr_{1}^{4}(\xi x^{2})$, $x \in \mathbb{F}_{3^4}$. Then $f_{0}, f_{1}, f_{2}$ are weakly regular bent functions with $\mu_{f_{0}}=\mu_{f_{1}}=-1, \mu_{f_{2}}=1$. Let $g_{0}(y_{1}, y_{2})=Tr_{1}^{2}(y_{1}y_{2}^{7})$, $g_{1}(y_{1}, y_{2})=Tr_{1}^{2}(z y_{1}y_{2}^{7})$, $(y_{1}, y_{2})\in \mathbb{F}_{3^2}\times \mathbb{F}_{3^2}$. Let $g=0$. Then the function $F: \mathbb{F}_{3^4}\times \mathbb{F}_{3^2}\times \mathbb{F}_{3^2}\rightarrow \mathbb{F}_{3}$ constructed by Corollary 5 is a non-weakly regular bent function. Further, we will prove in Appendix that it is not in the completed Generalized Maiorana-McFarland class.
\end{example}

In the rest of this section, by using Corollary 5, we give constructions of plateaued functions in the subclass \emph{WRP} of the class of weakly regular plateaued functions and vectorial plateaued functions.

In \cite{Mesnager7}, Mesnager and S{\i}nak introduced the notion of class \emph{WRP}, which is a subclass of the class of weakly regular plateaued functions and plays an important role in constructing minimal linear codes and strongly regular graphs (see \cite{Mesnager7}, \cite{Mesnager8}).
Let $p$ be an odd prime. Let $f: V_{n}\rightarrow \mathbb{F}_{p}$ be an unbalanced weakly regular $s$-plateaued function. If $f(0)=0$ and there exists an even positive integer $h$ with $gcd(h-1, p-1)=1$ such that $f(ax)=a^{h}f(x), x \in V_{n}$ for any $a\in \mathbb{F}_{p}^{*}$, then $f$ belongs to the class \emph{WRP}. Note that all quadratic functions without affine term are in the class \emph{WRP} and $h=2$. We give a construction of non-quadratic plateaued functions in the class \emph{WRP} by using Corollary 5.

Let $p$ be an odd prime and $m$ be an even positive integer. Let $f: \mathbb{F}_{p}^{m}\rightarrow \mathbb{F}_{p}$ be a partial spread bent function (see \cite{Lisonek}). Then by Theorem 3.3 and Theorem 3.6 of \cite{Lisonek}, it is easy to see that for any $a\in \mathbb{F}_{p}^{*}$, $f(ax)=f(x)$. Let $t, r$ be positive integers, $s$ be a non-negative integer and $r-s$ be an even positive integer. For any $i\in \mathbb{F}_{p}^{t}$, let $b_{i}: \mathbb{F}_{p}^{r-s}\rightarrow \mathbb{F}_{p}$ be a partial spread bent function, $M_{i}$ be a row full rank matrix over $\mathbb{F}_{p}$ of size $(r-s) \times r$. Define
\begin{equation}\label{29}
 f_{i}(x)=b_{i}(xM_{i}^{T}), x \in \mathbb{F}_{p}^{r}, i \in \mathbb{F}_{p}^{t}.
\end{equation}
Then for any $i\in \mathbb{F}_{p}^{t}$, $f_{i}$ is an $s$-plateaued function whose Walsh support $S_{f_{i}}$ is a linear subspace and $\mu_{f_{i}}(x)=1, x \in S_{f_{i}}$ by Theorem 1. Further, for any $a\in \mathbb{F}_{p}^{*}$, $f_{i}(ax)=f_{i}(x), x \in \mathbb{F}_{p}^{r}$.

\begin{proposition}\label{Proposition3}
Let $p$ be an odd prime and $k=1$. Let $t, r, m$ be positive integers with $m\geq t+1$. Let $s\ (< r)$ be a non-negative integer. Let $g_{j} \ (0\leq j \leq t) : \mathbb{F}_{p^m}\times \mathbb{F}_{p^m}\rightarrow \mathbb{F}_{p}$ be defined as $g_{j}(y)=Tr_{1}^{m}(\alpha_{j} G(y_{1}y_{2}^{p^m-2})), y=(y_{1}, y_{2})\in \mathbb{F}_{p^m}\times \mathbb{F}_{p^m}$, where $G$ is a permutation over $\mathbb{F}_{p^m}$ with $G(0)=0$ and $\alpha_{0}, \alpha_{1}, \dots, \alpha_{t}\in \mathbb{F}_{p^m}$ are linearly independent over $\mathbb{F}_{p}$.
\begin{itemize}
  \item Case $p=3$: Let $f_{i} \ (i \in \mathbb{F}_{p}^{t}) : V_{r}\rightarrow \mathbb{F}_{p}$ be weakly regular $s$-plateaued functions satisfying $\mu_{f_{i}}(x)=u, x \in S_{f_{i}}, i \in \mathbb{F}_{p}^{t}$, where $\mu_{f_{i}}$ is defined by (3) and $u$ is some constant independent of $i$, $f_{i}(ax)=a^{2}f_{i}(x), x \in V_{r}, i \in \mathbb{F}_{p}^{t}$ for any $a \in \mathbb{F}_{p}^{*}$ and $0 \in S_{f_{(0, \dots, 0)}}$. Let $g: \mathbb{F}_{p}^{t}\rightarrow \mathbb{F}_{p}$ be an arbitrary function with $g(0, \dots, 0)=-f_{(0, \dots, 0)}(0)$.
  \item Case $p\geq 5$ : Let $r-s$ be even. Let $f_{i}, i \in \mathbb{F}_{p}^{t}$ be defined as (29). Let $g: \mathbb{F}_{p}^{t}\rightarrow \mathbb{F}_{p}$ be an arbitrary function with $g(0, \dots, 0)=-f_{(0, \dots, 0)}(0)$.
\end{itemize}
Then the function $F: V_{r}\times \mathbb{F}_{p^m}\times \mathbb{F}_{p^m}\rightarrow \mathbb{F}_{p}$ constructed by Corollary 5 is a weakly regular $s$-plateaued function and in the class WRP.
\end{proposition}
\begin{proof}
By Corollary 5, $F$ is an $s$-plateaued function. Further, by the proof of Theorem 5, it is easy to see that $F$ is weakly regular and
$S_{F}=\cup_{y \in \mathbb{F}_{p^m}\times \mathbb{F}_{p^m}}S_{f_{(g_{0}^{*}(y)-g_{1}^{*}(y), \dots, g_{0}^{*}(y)-g_{t}^{*}(y))}}\times \{y\}$. Since $g_{0}^{*}(0, 0)-g_{j}^{*}(0, 0)=0, 1\leq j \leq t$ and $0 \in S_{f_{(0, \dots, 0)}}$, we have $(0, 0, 0) \in S_{F}$, that is, $F$ is unbalanced. Since $g(0, \dots, 0)=-f_{(0, \dots, 0)}(0)$, $F(0, 0, 0)=0$. As $f_{i}(ax)=f_{i}(x), x \in V_{r}, i\in \mathbb{F}_{p}^{t}$, $g_{j}(ay)=g_{j}(y), y\in \mathbb{F}_{p^m}\times \mathbb{F}_{p^m}, 0\leq j\leq t$ for any $a \in \mathbb{F}_{p}^{*}$, the weakly regular plateaued function $F$ constructed by Corollary 5 satisfies $F(ax, ay)=F(x, y)=a^{p-1}F(x, y), (x, y)\in V_{r} \times \mathbb{F}_{p^m} \times \mathbb{F}_{p^m}$ for any $a \in \mathbb{F}_{p}^{*}$. Note that $p-1$ is even and $gcd(p-2, p-1)=1$. By definition, $F$ is in the \emph{WRP} class.
\end{proof}

We give an example of non-quadratic plateaued function in the \emph{WRP} class by using Proposition 3.
\begin{example}
Let $p=3, t=1, r=2, m=2, s=1$. Let $z$ be the primitive element of $\mathbb{F}_{3^2}$ with $z^2+2z+2=0$. Let $g_{0}(y_{1}, y_{2})=Tr_{1}^{2}(y_{1}y_{2}^{7}), g_{1}(y_{1}, y_{2})=Tr_{1}^{2}(zy_{1}y_{2}^{7}), (y_{1}, y_{2})\in \mathbb{F}_{3^2}\times \mathbb{F}_{3^2}$. Let $f_{0}(x_{1}, x_{2})=x_{1}^{2}, f_{1}(x_{1}, x_{2})=x_{2}^{2}, f_{2}(x_{1}, x_{2})$ $=x_{1}^{2}+x_{1}x_{2}+x_{2}^{2}, (x_{1}, x_{2})\in \mathbb{F}_{3}^{2}$. Then $f_{i}, i \in \mathbb{F}_{3}$ are $1$-plateaued functions with $\mu_{f_{i}}(x_{1}, x_{2})=\sqrt{-1}, (x_{1}, x_{2})\in S_{f_{i}}$ and $f_{i}(ax_{1}, ax_{2})=a^{2}f_{i}(x_{1}, x_{2}), (x_{1}, x_{2})\in \mathbb{F}_{3}^{2}$ for any $a\in \mathbb{F}_{3}^{*}$ and $(0, 0)\in S_{f_{0}}$. Let $g=0$. Then the function $F$ constructed by Proposition 3 is $F(x_{1},x_{2}, y_{1}, y_{2})=Tr_{1}^{2}(y_{1}y_{2}^{7})+x_{1}^{2}+(Tr_{1}^{2}((1-z)y_{1}y_{2}^{7}))^{2}(x_{1}^{2}+2x_{1}x_{2}+x_{2}^{2})+(Tr_{1}^{2}((1-z)y_{1}y_{2}^{7}))(x_{1}^{2}+x_{1}x_{2}), (x_{1}, x_{2}, y_{1}, y_{2})\in \mathbb{F}_{3}^{2}\times \mathbb{F}_{3^2}\times \mathbb{F}_{3^2}$, which is a non-quadratic weakly regular $1$-plateaued function and in the WRP class. Furthermore, one can verify that the Walsh support of $F$ is not an affine subspace, that is, $F$ is not a partially bent function.
\end{example}

Let $f=(f_{1}, \dots, f_{m})$ be a vectorial function from $V_{n}$ to $\mathbb{F}_{p}^{m}$. Then $f$ is said to be a vectorial plateaued function if for any nonzero vector $(c_{1}, \dots, c_{m})\in \mathbb{F}_{p}^{m}$, $\sum_{i=1}^{m}c_{i}f_{i}$ is a plateaued function from $V_{n}$ to $\mathbb{F}_{p}$. We give a construction of vectorial plateaued functions by using Corollary 5.

\begin{proposition}\label{Proposition4}
Let $p$ be a prime, $r\geq 1, m\geq 3, 0\leq s \leq r$ be integers and $r+s$ be even for $p=2$. Let $\{\alpha_{0}, \dots, \alpha_{m-1}\}$ be a basis of $\mathbb{F}_{p^m}$ over $\mathbb{F}_{p}$. Let $f_{0}, \dots, f_{p-1}: V_{r} \rightarrow \mathbb{F}_{p}$ be $s$-plateaued functions, $G$ be a permutation over $\mathbb{F}_{p^m}$ with $G(0)=0$. Define $h_{i}(x, y_{1}, y_{2})=f_{Tr_{1}^{m}(\alpha_{0} G(y_{1} y_{2}^{p^m-2}))}(x)+Tr_{1}^{m}(\alpha_{i} G(y_{1}y_{2}^{p^m-2})), (x, y_{1}, y_{2})\in V_{r}\times \mathbb{F}_{p^m}\times \mathbb{F}_{p^m}, 1\leq i \leq m-1$. Then vectorial function $H=(h_{1}, \dots, h_{m-1})$ is a vectorial plateaued function from $V_{r}\times \mathbb{F}_{p^m}\times \mathbb{F}_{p^m}$ to $\mathbb{F}_{p}^{m-1}$.
\end{proposition}
\begin{proof}
First by the similar argument as in the proof of Lemma 1, we have that if $\alpha, \beta \in \mathbb{F}_{p^m}$ are linearly independent over $\mathbb{F}_{p}$, then $h(x, y_{1}, y_{2})=f_{Tr_{1}^{m}(\beta G(y_{1}y_{2}^{p^m-2}))}(x)+Tr_{1}^{m}(\alpha G(y_{1}y_{2}^{p^m-2})), $ $(x, y_{1}, y_{2})\in V_{r}\times\mathbb{F}_{p^m}\times \mathbb{F}_{p^m}$ is an $s$-plateaued function, where $f_{0}, \dots, f_{p-1}$ are $s$-plateaued functions and $G$ is a permutation over $\mathbb{F}_{p^m}$ with $G(0)=0$.

For any nonzero vector $a=(a_{1}, \dots, a_{m-1})\in \mathbb{F}_{p}^{m-1}$, let $\bar{a}=\sum_{i=1}^{m-1}a_{i}, \alpha_{a}=\sum_{i=1}^{m-1}a_{i}\alpha_{i}$. If $\bar{a}\neq 0$, in this case $\sum_{i=1}^{m-1}a_{i}h_{i}(x, y_{1}, y_{2})=\bar{a}f_{Tr_{1}^{m}(\alpha_{0} G(y_{1}y_{2}^{p^m-2}))}(x)+Tr_{1}^{m}(\alpha_{a}G(y_{1}y_{2}^{p^m-2}))$. By Theorem 1 of  \cite{Cesmelioglu2}, $\bar{a}f_{0}, \dots, \bar{a}f_{p-1}$ are $s$-plateaued functions. Since $\bar{a}f_{0}, \dots, \bar{a}f_{p-1}$ are $s$-plateaued functions and $\alpha_{0}, \alpha_{a}$ are linearly independent, we have $\sum_{i=1}^{m-1}a_{i}h_{i}$ is an $s$-plateaued function. If $\bar{a}=0$, in this case $\sum_{i=1}^{m-1}a_{i}h_{i}(x, y_{1}, y_{2})=Tr_{1}^{m}(\alpha_{a}G(y_{1}y_{2}^{p^m-2}))$. Since $\alpha_{a} \neq 0$, it is easy to see that $\sum_{i=1}^{m-1}a_{i}h_{i}$ is an $r$-plateaued function.
\end{proof}

We give an example of vectorial plateaued function by using Proposition 4.

\begin{example}
Let $p=3, r=3, m=4, s=0$. Let $f_{j}(x)=Tr_{1}^{3}(\xi^{j} x^{2}), x \in \mathbb{F}_{3^3}, j \in \mathbb{F}_{3}$, where $\xi$ is a primitive element of $\mathbb{F}_{3^3}$. Then $f_{j}\ (j \in \mathbb{F}_{3})$ are weakly regular bent functions with $\mu_{f_{0}}=\mu_{f_{2}}=-\sqrt{-1}, \mu_{f_{1}}=\sqrt{-1}$. Let $h_{i}(x, y_{1}, y_{2})=f_{Tr_{1}^{4}(y_{1}y_{2}^{79})}(x)+Tr_{1}^{4}(z^{i} y_{1}y_{2}^{79}), (x, y_{1}, y_{2})\in \mathbb{F}_{3^3}\times \mathbb{F}_{3^4}\times \mathbb{F}_{3^4}, i=1, 2, 3$, where $z$ is a primitive element of $\mathbb{F}_{3^4}$. Then $H=(h_{1}, h_{2}, h_{3})$ is a vectorial plateaued function. Furthermore, one can verify that $H$ contains non-weakly regular plateaued component functions and weakly regular plateaued component functions.
\end{example}

\section{Constructions of pairwise disjoint spectra generalized plateaued functions}
In this section, we discuss the constructions of pairwise disjoint spectra generalized plateaued functions with (non)-affine Walsh supports and we present a construction of generalized bent functions by using pairwise disjoint spectra generalized plateaued functions as building blocks.

First of all, we illustrate that by using Theorem 1 and some known generalized bent functions as building blocks, we can construct pairwise disjoint spectra generalized plateaued functions.
\begin{itemize}
  \item Let $p$ be a prime, $n, s \ (<n)$ be positive integers and $n+s$ be even for $p=2$. Let $E$ and $E'$ be $(n-s)$-dimensional and $s$-dimensional linear subspaces of $\mathbb{F}_{p}^{n}$ respectively and satisfy $E\oplus E'=\mathbb{F}_{p}^{n}$, where $\oplus$ denotes direct sum. Note that this can be easily done, for example, let $E=<\alpha_{1}, \dots, \alpha_{n-s}>$ and $E'=<\alpha_{n-s+1}, \dots, \alpha_{n}>$, where $\{\alpha_{1}, \dots, \alpha_{n}\}$ is some basis of $\mathbb{F}_{p}^{n}$. Suppose $E'=\{e'_{0}, \dots, e'_{p^s-1}\}$. Let $S_{i}=e'_{i}+E$, $0\leq i \leq p^{s}-1$. Then $S_{i}\cap S_{j}=\emptyset$ if $i\neq j$. For any $0\leq i \leq p^s-1$, let $M_{i}$ be a matrix whose row vectors form a basis of the $(n-s)$-dimensional linear subspace $E$, and $g_{i} \ (0\leq i \leq p^s-1): \mathbb{F}_{p}^{n-s}\rightarrow \mathbb{Z}_{p^k}$ be generalized bent functions. Then by Theorem 1, for any $0\leq i \leq p^s-1$, $f_{i}: \mathbb{F}_{p}^{n}\rightarrow \mathbb{Z}_{p^k}$ defined as
   \begin{equation}\label{30}
   f_{i}(x)=g_{i}(xM_{i}^{T})+p^{k-1}e'_{i}\cdot x
   \end{equation}
   is a generalized $s$-plateaued function with $S_{i}$ as Walsh support. Therefore, $f_{i} \ (0\leq i \leq p^s-1)$ defined by (30) are pairwise disjoint spectra generalized $s$-plateaued functions.
\end{itemize}

We give an example to illustrate the above construction of pairwise disjoint spectra generalized plateaued functions.

\begin{example}\label{Example11}
Let $p=2, n=5, s=1, k=3$. Let $E=\mathbb{F}_{2}^{4} \times \{0\}$, $E'=\{e'_{0}=(0, 0, 0, 0, 0), e'_{1}=(0, 0, 0, 0, 1)\}$, then $E$ and $E'$ are $4$-dimensional and $1$-dimensional linear subspaces of $\mathbb{F}_{2}^{5}$ respectively and $E\oplus E'=\mathbb{F}_{2}^{5}$. Define $g_{0}, g_{1}: \mathbb{F}_{2}^{4}\rightarrow \mathbb{Z}_{2^3}$ as $g_{0}(x_{1}, \dots, x_{4})=4(x_{1}x_{3}+x_{2}x_{4})+2x_{3}+x_{3}x_{4}$, $g_{1}(x_{1}, \dots, x_{4})=4(x_{1}x_{3}+x_{2}x_{4})+2x_{1}x_{2}+x_{1}$, then $g_{0}, g_{1}$ are generalized bent functions. Let $M_{1}=M_{2}=M$, where $(1, 0, 0, 0, 0), (0, 1, 0, 0, 0), (0, 0, 1, 0, 0), $ $(0, 0, 0, 1, 0)$ are the row vectors of $M$. Then by (30), $f_{0}(x_{1}, \dots, x_{5})$ $=4(x_{1}x_{3}+x_{2}x_{4})+2x_{3}+x_{3}x_{4}$, $f_{1}(x_{1}, \dots, x_{5})=4(x_{1}x_{3}+x_{2}x_{4}+x_{5})+2x_{1}x_{2}+x_{1}$ are disjoint spectra generalized $1$-plateaued functions with $S_{f_{0}}=e'_{0}+E, S_{f_{1}}=e'_{1}+E$.
\end{example}

The Walsh supports of $f_{i} \ (0\leq i \leq p^s-1)$ defined by (30) are affine subspaces. In the following, we discuss the constructions of pairwise disjoint spectra (generalized) $p$-ary plateaued functions with (non)-affine Walsh supports for any prime $p$.

Let $f^{[i]} \ (0\leq i \leq p^s-1)$ be (generalized) $s$-plateaued functions with (the matrix form) $S_{f^{[i]}}=(T_{t_{1}^{[i]}}, \dots, T_{t_{s}^{[i]}}, T_{h_{1}^{[i]}}, \dots, T_{h_{n-s}^{[i]}})$ constructed by Theorem 2 (or Theorem 3, Theorem 4) for which
\begin{equation}\label{31}
  (t_{1}^{[i]}(x), \dots, t_{s}^{[i]}(x))=(t_{1}^{[0]}(x), \dots, t_{s}^{[0]}(x))+v_{i}, 0\leq i \leq p^s-1,
\end{equation}
and there exist $G_{j} \ (1\leq j\leq n-s): \mathbb{F}_{p}^{s}\rightarrow \mathbb{F}_{p}$ independent of $i$ such that
\begin{equation}\label{32}
  h_{j}^{[i]}(x)=G_{j}(t_{1}^{[i]}(x), \dots, t_{s}^{[i]}(x))+L_{j}(x)+b_{j}, 1\leq j \leq n-s, 0\leq i \leq p^s-1,
\end{equation}
where $\{v_{0}, \dots, v_{p^s-1}\}$ is the lexicographic order of $\mathbb{F}_{p}^{s}$, $L_{j} \ (1\leq j \leq n-s): V_{n-s}\rightarrow \mathbb{F}_{p}$ are linearly independent linear functions independent of $i$, $b_{j} \ (1\leq j\leq n-s)$ are arbitrary elements in $\mathbb{F}_{p}$ independent of $i$. By the construction given in Theorem 2 (or Theorem 3, Theorem 4), we can see that the conditions (31) and (32) for $f^{[i]} \ (0\leq i \leq p^s-1)$ are easy to be satisfied. We show that the (generalized) $s$-plateaued functions constructed by Theorem 2 (or Theorem 3, Theorem 4) which satisfy the conditions (31) and (32) are pairwise disjoint spectra (generalized) $s$-plateaued functions.
\begin{proposition}\label{Proposition5}
Let $f^{[i]} \ (0\leq i \leq p^s-1)$ be (generalized) $s$-plateaued functions with (the matrix form) $S_{f^{[i]}}=(T_{t_{1}^{[i]}}, \dots, T_{t_{s}^{[i]}}, T_{h_{1}^{[i]}}, \dots, T_{h_{n-s}^{[i]}})$ constructed by Theorem 2 (or Theorem 3, Theorem 4) for which the conditions (31) and (32) hold, then $f^{[i]} \ (0\leq i \leq p^s-1)$ are pairwise disjoint spectra (generalized) $s$-plateaued functions.
\end{proposition}
\begin{proof}
If there exist $0\leq i\neq i'\leq p^s-1$ and $x, x'\in V_{n-s}$ such that $(t_{1}^{[i]}(x), \dots, t_{s}^{[i]}(x), h_{1}^{[i]}(x), $ $\dots, h_{n-s}^{[i]}(x))=(t_{1}^{[i']}(x'), \dots, t_{s}^{[i']}(x'), h_{1}^{[i']}(x'), \dots, h_{n-s}^{[i']}(x'))$, then by (32), $L_{j}(x)=L_{j}(x')$ for any $1\leq j \leq n-s$. Since $L_{1}, \dots, L_{n-s}$ are linearly independent linear functions, we can see that $x=x'$ and thus $(t_{1}^{[i]}(x), \dots, $ $t_{s}^{[i]}(x))=(t_{1}^{[i']}(x), \dots, t_{s}^{[i']}(x))$. By (31), $(t_{1}^{[i]}(x), \dots, t_{s}^{[i]}(x))=(t_{1}^{[i']}(x), \dots,$ $t_{s}^{[i']}(x))+v_{i}-v_{i'}$, which contradicts $(t_{1}^{[i]}(x), \dots, t_{s}^{[i]}(x))=(t_{1}^{[i']}(x), \dots, t_{s}^{[i']}(x))$. Hence, $S_{f^{[i]}}\cap S_{f^{[i']}}=\emptyset$ for any $i\neq i'$.
\end{proof}

We give an example of pairwise disjoint spectra $3$-ary plateaued functions with non-affine Walsh supports by using Theorem 4 and Proposition 5.

\begin{example}\label{Example12}
Let $z$ be a primitive element of $\mathbb{F}_{3^4}$. Define $g^{[i]}=(g_{1}^{[i]}, g_{2}): \mathbb{F}_{3^4}\rightarrow \mathbb{F}_{3}^{2}, i \in \mathbb{F}_{3}$ as $g_{1}^{[i]}(x)=Tr_{1}^{4}(\alpha_{i}x^{2})$, $g_{2}(x)=Tr_{1}^{4}(zx^{2})$, where $\alpha_{0}=z^{11}, \alpha_{1}=z^{21}, \alpha_{2}=z^{31}$. Then by (9), it is easy to verify that $g^{[i]}=(g_{1}^{[i]}, g_{2}), i \in \mathbb{F}_{3}$ are vectorial bent functions with $\mu_{g_{1}^{[i]}+cg_{2}}=1$ for any $c \in \mathbb{F}_{3}$. For any $i \in \mathbb{F}_{3}$, let $t_{1}^{[i]}(x)=2g_{2}(x)+i$, $h_{1}^{[i]}(x)=t_{1}^{[i]}(x)+Tr_{1}^{4}(x)$, $h_{2}^{[i]}(x)=2t_{1}^{[i]}(x)+Tr_{1}^{4}(zx)$, $h_{3}^{[i]}(x)=t_{1}^{[i]}(x)+Tr_{1}^{4}(z^{2}x)$, $h_{4}^{[i]}(x)=2t_{1}^{[i]}(x)+Tr_{1}^{4}(z^{3}x)$. Then by (9), the $1$-plateaued functions $f^{[i]} \ (i \in \mathbb{F}_{3}): \mathbb{F}_{3}^{5}\rightarrow \mathbb{F}_{3}$ constructed by Theorem 4 are $f^{[i]}(x_{1}, \dots, x_{5})=Tr_{1}^{4}(\frac{2(x_{2}+x_{3}z+x_{4}z^{2}+x_{5}z^{3})^{2}}{\alpha_{i}+(2x_{1}+2x_{2}+x_{3}+2x_{4}+x_{5})z})+(x_{1}+x_{2}+2x_{3}+x_{4}+2x_{5})i$ with (the matrix form) $S_{f^{[i]}}=(T_{t_{1}^{[i]}}, T_{h_{1}^{[i]}}, \dots, T_{h_{4}^{[i]}})$. Since $t_{1}^{[i]} \ (i\in \mathbb{F}_{3})$ are all neither balanced nor constant, $S_{f^{[i]}} \ (i \in \mathbb{F}_{3})$ are all non-affine. Further, by Proposition 5, $S_{f^{[i]}} \cap S_{f^{[j]}}=\emptyset$ for any $i\neq j$, therefore, $f^{[i]} \ (i \in \mathbb{F}_{3})$ are pairwise disjoint spectra $1$-plateaued functions with non-affine Walsh supports.
\end{example}

Based on Corollaries 4, 5, 6, we give explicit constructions of pairwise disjoint spectra generalized plateaued functions whose Walsh supports can be non-affine. First we give a lemma.

\begin{lemma}\label{Lemma2}
Let $p$ be a prime, $r, t \ (< r), k$ be positive integers and $r+t$ be even when $p=2, k=1$. Suppose $l_{c} \ (c \in \mathbb{F}_{p}^{t}): V_{r}\rightarrow \mathbb{Z}_{p^k}$ are pairwise disjoint spectra generalized $t$-plateaued functions. Let $g_{i} \ (0\leq i \leq t)$ be defined by (26) (respectively, (27), (28)), and let $g^{[j]} \ (j \in \mathbb{F}_{p}^{t}): \mathbb{F}_{p}^{t}\rightarrow \mathbb{Z}_{p^k}$ be arbitrary functions. Define $f_{c}^{[j]} \ (c, j \in \mathbb{F}_{p}^{t}): V_{r}\rightarrow \mathbb{Z}_{p^k}$ as $f_{c}^{[j]}(x)=l_{P_{1}(c)+P_{2}(j)}(x)$, where $P_{1}: \mathbb{F}_{p}^{t}\rightarrow \mathbb{F}_{p}^{t}$ is an arbitrary function and $P_{2}$ is a permutation of $\mathbb{F}_{p}^{t}$. Then $F^{[j]}(x, y)=f^{[j]}_{(g_{0}(y)-g_{1}(y), \dots, g_{0}(y)-g_{t}(y))}(x)+p^{k-1}g_{0}(y)+g^{[j]}(g_{0}(y)-g_{1}(y), \dots, g_{0}(y)-g_{t}(y))$, $j \in \mathbb{F}_{p}^{t}$ are pairwise disjoint spectra generalized $t$-plateaued functions.
\end{lemma}
\begin{proof}
For the sake of simplicity, we only consider the case that $g_{i} \ (0\leq i \leq t)$ are defined by (26) since the other cases are the same. First of all, by Corollary 4, $F^{[j]} \ (j \in \mathbb{F}_{p}^{t})$ are generalized $t$-plateaued functions. Suppose there exist $j, j' \in \mathbb{F}_{p}^{t}$ with $j \neq j'$ and $a \in V_{r}, b \in \mathbb{F}_{p}^{m}\times \mathbb{F}_{p}^{m}$ such that $(a, b)\in S_{F^{[j]}}$ and $(a, b)\in S_{F^{[j']}}$. By (25), we have $S_{F^{[j]}}=\cup_{y\in \mathbb{F}_{p}^{m}\times\mathbb{F}_{p}^{m}}S_{f^{[j]}_{(g_{0}^{*}(y)-g_{1}^{*}(y), \dots, g_{0}^{*}(y)-g_{t}^{*}(y))}}\times \{y\}$. Then $a \in S_{f^{[j]}_{c}}$ and $a \in S_{f^{[j']}_{c}}$ for $c=(g_{0}^{*}(b)-g_{1}^{*}(b), \dots, g_{0}^{*}(b)-g_{t}^{*}(b))$. By the definition of $f^{[j]}_{c}, f^{[j']}_{c}$, we have $f^{[j]}_{c}=l_{P_{1}(c)+P_{2}(j)}$ and $f^{[j']}_{c}=l_{P_{1}(c)+P_{2}(j')}$, therefore, $a \in S_{l_{P_{1}(c)+P_{2}(j)}}$
and $a \in S_{l_{P_{1}(c)+P_{2}(j')}}$, which is a contradiction since $l_{P_{1}(c)+P_{2}(j)}$ and $l_{P_{1}(c)+P_{2}(j')}$ are disjoint spectra functions.
\end{proof}

By Lemma 2, we obtain the following explicit constructions of pairwise disjoint spectra generalized plateaued functions whose Walsh supports can be non-affine.

\begin{theorem}\label{Theorem7}
Let $p$ be a prime, $r, t \ (< r), k$ be positive integers and $r+t$ be even when $p=2$. Let $\beta_{c}, c \in \mathbb{F}_{p}^{t}$ be all the elements in $\mathbb{F}_{p^t}$. For any $c \in \mathbb{F}_{p}^{t}$, let $\gamma_{c} \in \mathbb{F}_{p^{r-t}}^{*}$ and $\gamma_{c} \notin \{x^{3}: x \in \mathbb{F}_{p^{r-t}}\}$ when $p=2$. Define $l_{c} \ (c \in \mathbb{F}_{p}^{t}): \mathbb{F}_{p^{r-t}}\times \mathbb{F}_{p^t}\rightarrow \mathbb{Z}_{p^k}$ as $l_{c}(x)=p^{k-1}(Tr_{1}^{r-t}(\gamma_{c}x_{1}^{d})+Tr_{1}^{t}(\beta_{c}x_{2})), x=(x_{1}, x_{2})\in  \mathbb{F}_{p^{r-t}}\times \mathbb{F}_{p^t}$, where $d=2$ when $p$ is an odd prime, and $d=3$ when $p=2$. Let $g_{i} \ (0\leq i \leq t)$ be defined by (26) (respectively, (27), (28)), and let $g^{[j]} \ (j \in \mathbb{F}_{p}^{t}): \mathbb{F}_{p}^{t}\rightarrow \mathbb{Z}_{p^k}$ be arbitrary functions. Define $f_{c}^{[j]} \ (c, j \in \mathbb{F}_{p}^{t}): \mathbb{F}_{p^{r-t}}\times \mathbb{F}_{p^t} \rightarrow \mathbb{Z}_{p^k}$ as $f_{c}^{[j]}(x)=l_{P_{1}(c)+P_{2}(j)}(x)$, where $P_{1}: \mathbb{F}_{p}^{t}\rightarrow \mathbb{F}_{p}^{t}$ is an arbitrary function and $P_{2}$ is a permutation of $\mathbb{F}_{p}^{t}$. Then $F^{[j]}(x, y)=f^{[j]}_{(g_{0}(y)-g_{1}(y), \dots, g_{0}(y)-g_{t}(y))}(x)+p^{k-1}g_{0}(y)+g^{[j]}(g_{0}(y)-g_{1}(y), \dots, g_{0}(y)-g_{t}(y))$, $j \in \mathbb{F}_{p}^{t}$ are pairwise disjoint spectra generalized $t$-plateaued functions.
\end{theorem}
\begin{proof}
When $p$ is an odd prime, by (9), for any $c \in \mathbb{F}_{p}^{t}$, $Tr_{1}^{r-t}(\gamma_{c}x^{2})$ is a bent function. When $p=2$, for any $c \in \mathbb{F}_{p}^{t}$, $Tr_{1}^{r-t}(\gamma_{c}x^{3})$ is a Gold bent function (see \cite{Dillon}). Therefore, it is easy to see that $l_{c} \ (c \in \mathbb{F}_{p}^{t})$ are (trivial) generalized $t$-plateaued functions with $S_{l_{c}}=\mathbb{F}_{p^{r-t}}\times \{\beta_{c}\}$. Obviously $S_{l_{c}}\cap S_{l_{c'}}=\emptyset$ if $c\neq c'$, that is, $l_{c} \ (c \in \mathbb{F}_{p}^{t})$ are pairwise disjoint spectra generalized $t$-plateaued functions. By Lemma 2, the theorem holds.
\end{proof}

\begin{remark}\label{Remark5}
For a function $f: V_{n}\rightarrow \mathbb{Z}_{p^k}$ with $f=p^{k-1}f_{0}+\bar{f_{1}}$, $f_{0}: V_{n}\rightarrow \mathbb{F}_{p}$, $\bar{f_{1}}: V_{n}\rightarrow \mathbb{Z}_{p^{k-1}}$, define the corresponding partition of $V_{n}$ as $\mathcal{P}_{f}=\{A(a), a \in \mathbb{Z}_{p^{k-1}}\}$, where $A(a)=\{x \in V_{n}: \bar{f_{1}}(x)=a\}$. Note that for any integer $k\geq 2$, by using Theorem 7, we can construct pairwise disjoint spectra generalized $t$-plateaued functions $F^{[j]}, j \in \mathbb{F}_{p}^{t}$, where $t\geq k-1$, for which the corresponding partition $\mathcal{P}_{F^{[j]}}$ has a large number of nonempty sets compared with the maximum number $p^{k-1}$.
\end{remark}

We give an example to illustrate that the above theorem can be used to construct pairwise disjoint spectra generalized plateaued functions whose Walsh supports are not affine subspaces.

\begin{example}\label{Example13}
Let $p=3, t=1, r=2, k=2$. Let $l_{c} \ (c \in \mathbb{F}_{3}): \mathbb{F}_{3}\times \mathbb{F}_{3}\rightarrow \mathbb{Z}_{3^2}$ be defined by
$l_{0}(x_{1}, x_{2})=3x_{1}^{2}$, $l_{1}(x_{1}, x_{2})=3(2x_{1}^{2}+x_{2})$, $l_{2}(x_{1}, x_{2})=3(x_{1}^{2}+2x_{2})$, $g_{0}, g_{1}: \mathbb{F}_{3}^{2} \times \mathbb{F}_{3}^{2}\rightarrow \mathbb{F}_{3}$ be defined by $g_{0}(y_{1}, \dots, y_{4})=y_{1}y_{3}+y_{2}y_{4}$, $g_{1}(y_{1}, \dots, y_{4})=y_{1}y_{3}+y_{2}y_{4}+2y_{3}y_{4}$, and $g^{[j]} \ (j \in \mathbb{F}_{3}): \mathbb{F}_{3}\rightarrow \mathbb{Z}_{3^2}$ be defined by $g^{[0]}(x)=x^{2}$, $g^{[1]}(x)=g^{[2]}(x)=x$. Define $f^{[j]}_{c}=l_{c+j}, c, j \in \mathbb{F}_{3}$. Then the generalized $1$-plateaued functions $F^{[j]} \ (j \in \mathbb{F}_{3}): \mathbb{F}_{3}^{6}\rightarrow \mathbb{Z}_{3^2}$ constructed by Corollary 4 are $F^{[0]}(x_{1}, x_{2}, y_{1}, \dots, y_{4})=3(2x_{1}^{2}y_{3}^{2}y_{4}^{2}+2x_{1}^{2}y_{3}y_{4}+x_{2}y_{3}y_{4}+x_{1}^{2}+y_{1}y_{3}+y_{2}y_{4})+(y_{3}y_{4} \ mod \ 3)^{2}$, $F^{[1]}(x_{1}, x_{2}, y_{1}, \dots, y_{4})=3(2x_{1}^{2}y_{3}^{2}y_{4}^{2}+x_{2}y_{3}y_{4}+2x_{1}^{2}+y_{1}y_{3}+y_{2}y_{4}+x_{2})+(y_{3}y_{4} \ mod \ 3)$, $F^{[2]}(x_{1}, x_{2}, y_{1}, \dots, y_{4})=3(2x_{1}^{2}y_{3}^{2}y_{4}^{2}+x_{1}^{2}y_{3}y_{4}+x_{2}y_{3}y_{4}+x_{1}^{2}+y_{1}y_{3}+y_{2}y_{4}+2x_{2})+(y_{3}y_{4} \ mod \ 3)$. By Theorem 7, $F^{[j]}, j\in \mathbb{F}_{3}$ are pairwise disjoint spectra generalized $1$-plateaued functions and one can verify that $S_{F^{[j]}}, j \in \mathbb{F}_{3}$ are all not affine subspaces.
\end{example}

By using pairwise disjoint spectra generalized plateaued functions as building blocks, we give the following construction of generalized bent functions, which is an extension of Theorem 2 of \cite{Cesmelioglu2}.

\begin{theorem}\label{Theorem8}
Let $p$ be a prime, $n, s \ (\leq n), k$ be positive integers and $n+s$ be even for $p=2, k=1$. Let $f_{y} \ (y \in \mathbb{F}_{p}^{s}): \mathbb{F}_{p}^{n}\rightarrow \mathbb{Z}_{p^k}$ be pairwise disjoint spectra generalized $s$-plateaued functions. Let $W$ and $U$ be $n$-dimensional and $s$-dimensional linear subspaces of $\mathbb{F}_{p}^{n+s}$ respectively and satisfy $\mathbb{F}_{p}^{n+s}=W\oplus U$. Define
\begin{equation*}
  F(xM+\pi(y))=f_{y}(x), x \in \mathbb{F}_{p}^{n}, y \in \mathbb{F}_{p}^{s},
\end{equation*}
where $M$ is a matrix whose row vectors form a basis of $W$ and $\pi$ is a bijection from $\mathbb{F}_{p}^{s}$ to $U$. Then $F$ is a generalized bent function from $\mathbb{F}_{p}^{n+s}$ to $\mathbb{Z}_{p^k}$.
\end{theorem}
\begin{proof}
First it is easy to see that $F$ is a function from $\mathbb{F}_{p}^{n+s}$ to $\mathbb{Z}_{p^k}$. For any $a \in \mathbb{F}_{p}^{n+s}$,
 \begin{equation*}
 \begin{split}
   W_{F}(a)& =\sum_{x \in \mathbb{F}_{p}^{n}}\sum_{y \in \mathbb{F}_{p}^{s}}\zeta_{p^k}^{f_{y}(x)}\zeta_{p}^{-a \cdot (xM+\pi(y))}\\
           & =\sum_{y \in \mathbb{F}_{p}^{s}}\zeta_{p}^{-a\cdot \pi(y)}W_{f_{y}}(aM^{T}).
 \end{split}
\end{equation*}
Since $f_{y}, y \in \mathbb{F}_{p}^{s}$ are pairwise disjoint spectra generalized $s$-plateaued functions, we have $|S_{f_{y}}|=p^{n-s}$ and $S_{f_{y}}\cap S_{f_{y'}}=\emptyset$ for any $y\neq y'$, which yields that $S_{f_{y}}, y \in \mathbb{F}_{p}^{s}$ is a partition of $\mathbb{F}_{p}^{n}$. Hence for any $a \in \mathbb{F}_{p}^{n+s}$, there exists a unique $y_{a}\in \mathbb{F}_{p}^{s}$ such that $aM^{T} \in S_{f_{y_{a}}}$ and $|W_{F}(a)|=|\zeta_{p}^{-a\cdot \pi(y_{a})}W_{f_{y_{a}}}(aM^{T})|=p^{\frac{n+s}{2}}$, that is, $F$ is a generalized bent function.
\end{proof}

When $k=1$, $W=\mathbb{F}_{p}^{n}\times \{0_{s}\}$, $U=\{0_{n}\}\times \mathbb{F}_{p}^{s}$, $M$ is the matrix whose row vectors are $(1, 0, \dots, 0, 0 , \dots, 0), (0, 1, \dots, 0, 0, \dots, 0), \dots, (0, 0, \dots, 1, 0, \dots, 0)$ and $\pi(y)=(0_{n}, y), y \in \mathbb{F}_{p}^{s}$, where $0_{n}$ denotes the zero vector of $\mathbb{F}_{p}^{n}$, Theorem 8 reduces to Theorem 2 of \cite{Cesmelioglu2}. We give two examples to illustrate Theorem 8.

\begin{example}\label{Example14}
Let $p=2, n=5, s=1, k=3$. Let $f_{0}, f_{1}: \mathbb{F}_{2}^{5}\rightarrow \mathbb{Z}_{2^3}$ be defined as $f_{0}(x_{1}, \dots, x_{5})=4(x_{1}x_{3}+x_{2}x_{4})+2x_{3}+x_{3}x_{4}$, $f_{1}(x_{1}, \dots, x_{5})=4(x_{1}x_{3}+x_{2}x_{4}+x_{5})+2x_{1}x_{2}+x_{1}$. Then $f_{0}, f_{1}$ are disjoint spectra generalized $1$-plateaued functions constructed in Example 11. Let $W=\mathbb{F}_{2}^{5} \times \{0\}$, $U=\{0_{5}\}\times \mathbb{F}_{2}$, $M$ is the matrix whose row vectors are $(1, 0, \dots, 0, 0), \dots, (0, 0, \dots, 1, 0)$ and $\pi(y)=(0, \dots, 0, y), y \in \mathbb{F}_{2}$. Then the constructed generalized bent function $F: \mathbb{F}_{2}^{6}\rightarrow \mathbb{Z}_{2^3}$ by Theorem 8 is $F(x_{1}, \dots, x_{6})=f_{x_{6}}(x_{1}, \dots, x_{5})=4(x_{1}x_{3}+x_{2}x_{4}+x_{5}x_{6})+2((x_{1}x_{2}x_{6}+x_{3}(1+x_{6})) \ mod \ 2)+((x_{3}x_{4}(1+x_{6})+x_{1}x_{6}) \ mod \ 2)$.
\end{example}

\begin{example}\label{Example15}
Let $p=3, n=6, s=1, k=2$. Let $F^{[j]} \ (j \in \mathbb{F}_{3}): \mathbb{F}_{3}^{6}\rightarrow \mathbb{Z}_{3^2}$ be the pairwise disjoint spectra generalized $1$-plateaued functions constructed in Example 13. Let $W=\mathbb{F}_{3}^{6} \times \{0\}$, $U=\{0_{6}\}\times \mathbb{F}_{3}$, $M$ be the matrix whose row vectors are $(1, 0, \dots, 0, 0), \dots, (0, 0, \dots, 1, 0)$ and $\pi(y)=(0_{6}, y), y \in \mathbb{F}_{3}$. Then by Theorem 8, the function $F(x, y)=F^{[y]}(x), x \in \mathbb{F}_{3}^{6}, y\in \mathbb{F}_{3}$ is a generalized bent function.
\end{example}

\section{Conclusion}
In \cite{Mesnager9}, Mesnager \emph{et al.} introduced generalized plateaued functions from $V_{n}$ to $\mathbb{Z}_{p^k}$ in order to study plateaued functions in the general context of generalized $p$-ary functions. The objective of this paper is to increase constructions of generalized $p$-ary plateaued functions for any prime $p$ as there are lacks of constructions of generalized $p$-ary plateaued functions with $p$ taking any prime. In particular, when $k=1$, the constructions in this paper are applicable for plateaued functions.

(1) By Theorems 1, 2, 3, one can construct generalized $p$-ary plateaued functions with (non)-affine Walsh supports by using known generalized $p$-ary bent functions as building blocks (Note that Theorem 4 is only for $p$-ary plateaued functions by using vectorial bent functions as building blocks). In particular, when $p=2, k=1$, the constructions in Theorems 2, 3, 4 provide an answer to Open Problem 2 proposed in \cite{Hodzic3}.

(2) By Corollaries 4, 5, 6, one can construct generalized $p$-ary bent functions and generalized $p$-ary $s$-plateaued functions with larger variables by using generalized $p$-ary bent functions, respectively, generalized $p$-ary $s$-plateaued functions as building blocks. In particular, we show that the canonical way to construct the so-called Generalized Maiorana-McFarland bent functions can be obtained by Corollary 4 and we illustrate that Corollary 5 can be used to construct bent functions not in the completed Generalized Maiorana-McFarland class (see Example 8 and Appendix). Based on Corollary 5, we also give constructions of plateaued functions in the subclass \emph{WRP} of the class of weakly regular plateaued functions and vectorial plateaued functions.

(3) By the discussions in Section V, one can construct pairwise disjoint spectra generalized $p$-ary plateaued functions by Theorems 1, 2, 3, 7 (Note that Theorem 4 can only be used to construct pairwise disjoint spectra $p$-ary plateaued functions). By using pairwise disjoint spectra generalized $p$-ary plateaued functions as building blocks, one can construct generalized $p$-ary bent functions by Theorem 8.

Therefore, by recursively using our constructions, one can obtain infinitely many generalized $p$-ary plateaued functions with (non)-affine Walsh supports for any prime $p$.

Plateaued functions have important applications in coding theory, sequences and combinatorics. For examples, Mesnager \emph{et al}. \cite{Mesnager3} presented constructions of linear codes from weakly regular plateaued functions and the secret sharing schemes based on these linear codes. Mesnager and S{\i}nak \cite{Mesnager7,Mesnager8} constructed several classes of minimal linear codes with few weights and strongly regular graphs, association schemes from weakly regular plateaued functions. Bozta\c{s} \emph{et al}. \cite{Serdar} used plateaued functions to design sequences with good correlation properties. It is interesting to further study the applications of generalized plateaued functions in coding theory, sequences and combinatorics. For examples, constructing linear codes, sequences, strongly regular graphs and association schemes from generalized plateaued functions.

\appendix
We prove that the bent function constructed in Example 8 is not in the completed Generalized Maiorana-McFarland class.

Recall that the bent function constructed in Example 8 is $F(x, y_{1}, y_{2})=f_{g_{0}(y_{1}, y_{2})-g_{1}(y_{1}, y_{2})}(x)+g_{0}(y_{1}, y_{2})=f_{0}(x)+g_{0}(y_{1}, y_{2})+(g_{0}(y_{1}, y_{2})-g_{1}(y_{1}, y_{2}))^{2}(-f_{0}(x)-f_{1}(x)-f_{2}(x))+(g_{0}(y_{1}, y_{2})-g_{1}(y_{1}, y_{2}))(2f_{1}(x)+f_{2}(x)), (x, y_{1}, y_{2})\in \mathbb{F}_{3^4}\times \mathbb{F}_{3^2}\times \mathbb{F}_{3^2}$, where $f_{0}(x)=Tr_{1}^{4}(x^{34}+x^{2})$, $f_{1}(x)=Tr_{1}^{4}(x^{2})$, $f_{2}(x)=Tr_{1}^{4}(\xi x^{2})$, $g_{0}(y_{1}, y_{2})=Tr_{1}^{2}(y_{1}y_{2}^{7})$, $g_{1}(y_{1}, y_{2})=Tr_{1}^{2}(zy_{1}y_{2}^{7})$ and $\xi$ is the primitive element of $\mathbb{F}_{3^4}$ with $\xi^{4}+2\xi^{3}+2=0$, $z$ is the primitive element of $\mathbb{F}_{3^2}$ with $z^2+z+2=0$.

By Theorem 2 of \cite{Cesmelioglu3}, if $F$ is in the completed Generalized Maiorana-McFarland class, then for an integer $1\leq s \leq 4$ there exists an $s$-dimensional subspace $V$ of $\mathbb{F}_{3^4}\times \mathbb{F}_{3^2}\times \mathbb{F}_{3^2}$ such that the second order derivative
\begin{equation}\label{33}
  D_{a}D_{c}F(x, y_{1}, y_{2})=0
\end{equation}
for any $a=(a_{0}, a_{1}, a_{2}), c=(c_{0}, c_{1}, c_{2}) \in V, (x, y_{1}, y_{2})\in \mathbb{F}_{3^4}\times \mathbb{F}_{3^2}\times \mathbb{F}_{3^2}$. Define $\bar{g_{i}}(y)=g_{i}(y_{1}, y_{2}), i=0,1$ and $\bar{F}(x, y)=f_{\bar{g_{0}}(y)-\bar{g_{1}}(y)}(x)+\bar{g_{0}}(y)$, where $y=(y_{1, 1}, y_{1, 2}, y_{2, 1}, y_{2, 2})\in \mathbb{F}_{3}^{4}, (y_{1}, y_{2})\in \mathbb{F}_{3^2}\times \mathbb{F}_{3^2}, y_{1}=y_{1, 1}+y_{1, 2}z, y_{2}=y_{2, 1}+y_{2, 2}z$. Then $\bar{F}$ is a non-weakly regular bent function from $\mathbb{F}_{3^4}\times \mathbb{F}_{3}^{4}$ to $\mathbb{F}_{3}$. By simple calculation we have $\bar{g_{0}}(y)-\bar{g_{1}}(y)=(y_{1, 1}+y_{1, 2})y_{2, 1}^{2}y_{2, 2}+(2y_{1, 1}+y_{1, 2})y_{2, 1}y_{2, 2}^{2}+2y_{1, 1}y_{2, 2}+2y_{1, 2}y_{2, 1}$, $(\bar{g_{0}}(y)-\bar{g_{1}}(y))^{2}=y_{1, 1}^{2}y_{2, 2}^{2}+y_{1, 2}^{2}y_{2, 1}^{2}+y_{1, 1}y_{1, 2}y_{2, 1}y_{2, 2}$, where $y=(y_{1, 1}, y_{1, 2}, y_{2, 1}, y_{2, 2})\in \mathbb{F}_{3}^{4}$.

Suppose (33) holds. Then
\begin{equation}\label{34}
  D_{\bar{a}}D_{\bar{c}}\bar{F}(x, y)=0
\end{equation}
for any $a=(a_{0}, a_{1, 1}, a_{1, 2}, a_{2, 1}, a_{2, 2}), c=(c_{0}, c_{1, 1}, c_{1, 2}, c_{2, 1}, c_{2, 2}) \in \bar{V}, (x, y)\in \mathbb{F}_{3^4}\times \mathbb{F}_{3}^{4}$, where $\bar{V}=\{(a_{0}, a_{1, 1}, a_{1, 2}, a_{2, 1}, a_{2, 2})\in \mathbb{F}_{3^4}\times \mathbb{F}_{3}^{4}: (a_{0}, a_{1, 1}+a_{1, 2}z, a_{2, 1}+a_{2, 2}z) \in V \}, y=(y_{1, 1}, y_{1, 2},$ $ y_{2, 1}, y_{2, 2})\in \mathbb{F}_{3}^{4}$.
As $\{30 \cdot 3^{i} \ (mod \ (3^4-1)): i\geq 0\}=\{10, 30\}$ and $\left(\begin{array}{c}
                                                                         34 \\
                                                                         10
                                                                       \end{array}
\right)\equiv 0 \ (mod \ 3)$, $\left(\begin{array}{c}
                                                                         34 \\
                                                                         30
                                                                       \end{array}
\right)\equiv 2 \ (mod \ 3)$, $D_{\bar{a}}D_{\bar{c}}\bar{F}$ contains $-y_{1, 1}^{2}y_{2, 2}^{2}Tr_{1}^{4}(2((a_{0}+c_{0})^{4}-a_{0}^{4}-c_{0}^{4})x^{30})$. Then by (34), $Tr_{2}^{4}((a_{0}+c_{0})^{4}-a_{0}^{4}-c_{0}^{4})=0$ for any $\bar{a}=(a_{0}, a_{1, 1}, a_{1, 2}, a_{2, 1}, a_{2, 2}), \bar{c}=(c_{0}, c_{1, 1}, c_{1, 2}, c_{2, 1}, c_{2, 2})\in \bar{V}$. One can verify that for $a \in \mathbb{F}_{3^4}$, $Tr_{2}^{4}(a^{4})=0$ if and only if $a=0$. If there exists $a_{0}\neq 0$ such that $\bar{a}=(a_{0}, a_{1, 1}, a_{1, 2}, a_{2, 1}, a_{2, 2})\in \bar{V}$, let $\bar{c}=\bar{a}$, then $c_{0}=a_{0}\neq 0$ and $Tr_{2}^{4}((a_{0}+c_{0})^{4}-a_{0}^{4}-c_{0}^{4})=Tr_{2}^{4}(2a_{0}^{4})\neq 0$, which is a contradiction. Hence $\bar{V}\subseteq \{0\} \times \mathbb{F}_{3}^{4}$, that is, $V \subseteq \{0\} \times \mathbb{F}_{3^2}\times \mathbb{F}_{3^2}$. For any fixed $(0, a_{1}, a_{2})$, $(0, c_{1}, c_{2})\in V$ and $(y_{1}, y_{2})\in \mathbb{F}_{3^2}\times \mathbb{F}_{3^2}$, let $d_{0}=D_{(a_{1}, a_{2})}D_{(c_{1}, c_{2})}g_{0}(y_{1}, y_{2})$, $d_{1}=D_{(a_{1}, a_{2})}D_{(c_{1}, c_{2})}(g_{0}(y_{1}, y_{2})-g_{1}(y_{1}, y_{2}))$, $d_{2}=D_{(a_{1}, a_{2})}D_{(c_{1}, c_{2})}(g_{0}(y_{1}, y_{2})-g_{1}(y_{1}, y_{2}))^{2}$. By $D_{(0, a_{1}, a_{2})}D_{(0, c_{1}, c_{2})}F(x, y_{1}, y_{2})=D_{(a_{1}, a_{2})}D_{(c_{1}, c_{2})}g_{0}(y_{1}, y_{2})+(-f_{0}(x)-f_{1}(x)-f_{2}(x))D_{(a_{1}, a_{2})}D_{(c_{1}, c_{2})}$ $(g_{0}(y_{1}, y_{2})-g_{1}(y_{1}, y_{2}))^{2}+(2f_{1}(x)+f_{2}(x))D_{(a_{1}, a_{2})}D_{(c_{1}, c_{2})}(g_{0}(y_{1}, y_{2})-g_{1}(y_{1}, y_{2}))=0$ for any $(0, a_{1}, a_{2}), (0, c_{1}, c_{2})\in V, (x, y_{1}, y_{2})\in \mathbb{F}_{3^4}\times \mathbb{F}_{3^2}\times \mathbb{F}_{3^2}$, for any fixed $(0, a_{1}, a_{2}), (0, c_{1}, c_{2})\in V$ and $(y_{1}, y_{2})\in \mathbb{F}_{3^2}\times \mathbb{F}_{3^2}$, we have $-d_{2}f_{0}(x)+(2d_{1}-d_{2})f_{1}(x)+(d_{1}-d_{2})f_{2}(x)=-d_{0}, x \in \mathbb{F}_{3^4}$. By $f_{0}(0)=f_{1}(0)=f_{2}(0)=0$, we have $d_{0}=0$.
By $i+j\xi\neq 0$ for any $i, j\in \mathbb{F}_{3}$ and the algebraic degree of $f_{0}$ is 4, the algebraic degree of $f_{1}$ and $f_{2}$ is 2, we have $f_{0}, f_{1}, f_{2}$ are linearly independent, hence $d_{1}=d_{2}=0$. Therefore, (33) holds if and only if for any $(0, a_{1}, a_{2}), (0, c_{1}, c_{2})\in V, (y_{1}, y_{2})\in \mathbb{F}_{3^2}\times \mathbb{F}_{3^2}$,
\begin{equation}\label{35}
  D_{(a_{1}, a_{2})}D_{(c_{1}, c_{2})}g_{0}(y_{1}, y_{2})=0
\end{equation}
and
\begin{equation}\label{36}
D_{(a_{1}, a_{2})}D_{(c_{1}, c_{2})}(g_{0}(y_{1}, y_{2})-g_{1}(y_{1}, y_{2}))=0
\end{equation}
and
\begin{equation}\label{37}
  D_{(a_{1}, a_{2})}D_{(c_{1}, c_{2})}(g_{0}(y_{1}, y_{2})-g_{1}(y_{1}, y_{2}))^{2}=0.
\end{equation}
By (35), (36) and the fact that $\{1, 1-z\}$ is a basis of $\mathbb{F}_{3^2}$ over $\mathbb{F}_{3}$, we have for any fixed $(0, a_{1}, a_{2}), (0, c_{1}, c_{2})\in V$ and $(y_{1}, y_{2})\in \mathbb{F}_{3^2}\times \mathbb{F}_{3^2}$, $Tr_{1}^{2}(((y_{1}+a_{1}+c_{1})(y_{2}+a_{2}+c_{2})^{7}-(y_{1}+a_{1})(y_{2}+a_{2})^{7}-(y_{1}+c_{1})(y_{2}+c_{2})^{7}+y_{1}y_{2}^{7}) x)=0, x \in \mathbb{F}_{3^2}$, which yields $(y_{1}+a_{1}+c_{1})(y_{2}+a_{2}+c_{2})^{7}-(y_{1}+a_{1})(y_{2}+a_{2})^{7}-(y_{1}+c_{1})(y_{2}+c_{2})^{7}+y_{1}y_{2}^{7}=0$ for any $(0, a_{1}, a_{2}), (0, c_{1}, c_{2})\in V$ and $(y_{1}, y_{2})\in \mathbb{F}_{3^2}\times \mathbb{F}_{3^2}$. We claim $V\subseteq \{0\} \times \mathbb{F}_{3^2}\times \{0\}$. If there exists $a_{2}\neq 0 $ such that $a=(0, a_{1}, a_{2})\in V$, let $c=a$. Then $c_{2}=a_{2}\neq 0$ and the coefficient of $y_{1}y_{2}^{3}$ is $\left(\begin{array}{c}
                                                                         7 \\
                                                                         3
                                                                       \end{array}\right)((a_{2}+c_{2})^{4}-a_{2}^{4}-c_{2}^{4})=a_{2}^{4}\neq 0$, which is a contradiction. Hence $V\subseteq \{0\} \times \mathbb{F}_{3^2}\times \{0\}$, that is, $\bar{V} \subseteq \{0\} \times \mathbb{F}_{3}^{2} \times \{(0, 0)\}$. By (37), we have
$D_{(a_{1, 1}, a_{1, 2}, 0, 0)}D_{(c_{1, 1}, c_{1, 2}, 0, 0)}(\bar{g_{0}}(y)-\bar{g_{1}}(y))^{2}=0$ for any $(0, a_{1, 1}, a_{1, 2}, 0, 0), (0, c_{1, 1}, c_{1, 2}, 0, 0)\in \bar{V}, y=(y_{1, 1}, y_{1, 2}, y_{2, 1}, y_{2, 2})\in \mathbb{F}_{3}^{4}$. By simple calculation, we have $2a_{1, 1}c_{1, 1}y_{2, 2}^{2}+2a_{1, 2}c_{1, 2}y_{2, 1}^{2}+(a_{1, 1}c_{1, 2}+a_{1, 2}c_{1, 1})y_{2, 1}y_{2, 2}=0$, which yields $a_{1, 1}c_{1, 1}=a_{1, 2}c_{1, 2}=a_{1, 1}c_{1, 2}+a_{1, 2}c_{1, 1}=0$ for any $(0, a_{1, 1}, a_{1, 2}, 0, 0), (0, c_{1, 1}, c_{1, 2}, 0, 0)\in \bar{V}$. If there exists $(a_{1, 1}, a_{1, 2})\neq (0, 0)$ such that $\bar{a}=(0, a_{1, 1}, a_{1, 2}, 0, 0)\in \bar{V}$, let $\bar{c}=\bar{a}$, then $a_{1, 1}c_{1, 1}=a_{1, 1}^{2}\neq 0$ or $a_{1, 2}c_{1, 2}=a_{1, 2}^{2}\neq 0$ since $(a_{1, 1}, a_{1, 2})\neq (0, 0)$, which is a contradiction. Hence, $\bar{V}=\{(0, 0, 0, 0, 0)\}$, that is, $V=\{(0, 0, 0)\}$. By Theorem 2 of \cite{Cesmelioglu3}, $F$ is not in the completed Generalized Maiorana-McFarland class.

\end{document}